\tikzset{
    mynode/.style={draw,circle,fill=\c!30,scale=\s},
    root/.style={draw,rectangle,fill=\c!30,scale=\s}
}
\newtheorem{theorem}{Theorem}
\newtheorem{corollary}{Corollary}
\newtheorem{lemma}{Lemma}
\newenvironment{proof}{\paragraph{Proof:}}{\hfill$\square$}
\newcommand{\bz}{\text{\larger[0]{$\mathbf{0}$}}}
\newcommand{\bo}{\text{\larger[0]{$\mathbf{1}$}}}
\newcommand{\bk}{\text{\larger[0]{$\mathbf{k}$}}}
\newcommand{\Ss}{\texttt{SS}^1}
\newcommand{\Tt}{\texttt{T}}
\newcommand{\Bb}{\texttt{B}}
\newcommand{\rk}{k}
\newcommand{\Rer}{\underline{$R_{er}$}}
\newcommand{\Rreset}{\underline{$R_{reset}$}}
\newcommand{\RerRank}{\underline{$R_{erRank}$}}
\newcommand{\Rreach}{\underline{$R_{join}$}}
\newcommand{\Rtok}{\underline{$R_{tok}$}}
\newcommand{\Radd}{\underline{$R_{add}$}}
\newcommand{\Rready}{\underline{$R_{ready}$}}
\author[1]{Lélia Blin}
\author[2]{Franck Petit}
\author[2]{Sébastien Tixeuil}
\affil[1]{Université Paris Cité, CNRS, IRIF, F-75013, Paris, France}
\affil[2]{Sorbonne Université, CNRS, LIP6 UMR 7606, Paris, France}
\date{}
\title{Deterministic Self-Stabilizing BFS Construction in Constant Space}
\begin{document}
\maketitle
\section*{Abstract}

In this paper, we resolve a long-standing question in self-stabilization by demonstrating that it is indeed possible to construct a spanning tree in a semi-uniform network using constant memory per node. We introduce a self-stabilizing synchronous algorithm that builds a breadth-first search (BFS) spanning tree with only $O(1)$ bits of memory per node, converging in $2^\varepsilon$ time units—where $\varepsilon$ denotes the eccentricity of the distinguish node. Crucially, our approach operates without any prior knowledge of global network parameters such as maximum degree, diameter, or total node count. In contrast to traditional self-stabilizing methods—such as pointer-to-neighbor communication or distance-to-root computation—that are unsuitable under strict memory constraints, our solution employs an innovative constant-space token dissemination mechanism. This mechanism effectively eliminates cycles and rectifies deviations in the BFS structure, ensuring both correctness and memory efficiency. The proposed algorithm not only meets the stringent requirements of memory-constrained distributed systems but also opens new avenues for research in self-stabilizing protocols under severe resource limitations.


\section{Introduction}

This paper addresses the challenge of developing memory-efficient self-stabilizing algorithms for the Breadth-First Search (BFS) Spanning Tree Construction problem.
Self-stabilization~\cite{D74,D00,ADDP19,H22} is a versatile technique that facilitates recovery after arbitrary \textit{transient} faults impact the distributed system, affecting both the participating processes and the communication medium. Essentially, a self-stabilizing protocol can restore the system to a legitimate configuration (from which its behavior satisfies its specification), beginning from an arbitrary, potentially corrupted, initial global state, without the need of any human intervention.
A BFS spanning tree construction involves \textit{(i)} each node identifying a single neighbor as its parent, with the exception of the root node, which has no parent, and \textit{(ii)} the set of parent-child relationships forming an acyclic connected structure, ensuring that following the parent nodes from any node results in the shortest path to the root node.
Memory efficiency pertains to the amount of information transmitted to neighboring nodes to enable stabilization. A smaller space complexity results in reduced information transmission, which \textit{(i)} decreases the overhead of self-stabilization in the absence of faults or after stabilization, and \textit{(ii)} facilitates the integration of self-stabilization and replication~\cite{GCH06,HP00}.

A foundational result regarding space complexity in the context of self-stabilizing silent algorithms. An algorithm is \textit{silent} if each of its executions reaches a point in time after which the states of nodes do not change. A non-silent algorithm is referred to as \textit{talkative} (see~\cite{BT18}). Dolev et al.~\cite{DGS99} show that in $n$-node networks, $\Omega(\log n)$ bits of memory per node are required for silently solving tasks such as leader election or spanning tree construction. 
However, a self-stabilizing algorithm need not be silent, and talkative approaches were successful in reducing the memory cost of self-stabilizing algorithms~\cite{BT18,BT20}. 

In general networks, such as those considered in this paper, self-stabilizing leader election is closely linked to self-stabilizing tree construction. 
On one hand, the presence of a leader enables efficient self-stabilizing tree construction~\cite{CYH91,DIM93,CD94,J97,BPR13,KKM11,DDJL23}. 
On the other hand, growing and merging trees is the primary technique for designing self-stabilizing leader election algorithms in networks, as the leader often serves as the root of an inward tree~\cite{AKY90,AG94,AB98,BDPR10,BT18,BT20}.

From a memory perspective, the aforementioned works employ two algorithmic techniques for self-stabilization that negatively impact memory complexity.

The first technique involves using a \textit{distance} variable to store the distance of each node to the elected node in the network. This distance variable is employed in self-stabilizing spanning tree construction to break cycles resulting from arbitrary initial states (see~\cite{AKY90,AG94,AB98}). Clearly, storing distances in $n$-node networks may require $\Omega(\log n)$ bits per node. Some works~\cite{BT18,BT20} distribute pieces of information about the distances to the leader among the nodes using different mechanisms, allowing for the storage of $o(\log n)$ bits per node. However, in general networks, the best result so far~\cite{BT20} uses $O(\log \log n + \log \Delta)$ bits per node, closely matched by the $\Omega(\log \log n)$ bits per node lower bound obtained by Blin et al.~\cite{BFB23}.

The second technique involves using a \textit{pointer-to-neighbor} variable to unambiguously designate a specific neighbor for each node (with the designated neighbor being aware of this). For tree construction, pointer-to-neighbor variables typically store the parent node in the constructed tree (with the parent being aware of its children). In a naive implementation, the parent of each node is designated by its identifier, requiring $\Omega(\log n)$ bits for each pointer variable. However, it is possible to reduce the memory requirement to $O(\log \Delta)$ bits per pointer variable in networks with a maximum degree of $\Delta$ by using node-coloring at distance 2 instead of identifiers to identify neighbors. The best available deterministic self-stabilizing distance-2 coloring protocol~\cite{BT20} achieves a memory footprint of $O(\log \log n + \log \Delta)$ bits per node. This implies that any self-stabilizing protocol solely based on the pointer-to-neighbor technique~\cite{J97,DDJL23} (that is, not using distance variables) still uses at best $O(\log \log n + \log \Delta)$ bits per node, and that a lower bound of $O(\log \Delta)$ bits per node also holds for these solutions.

Overall, up to this paper, there exists no deterministic self-stabilizing solution to the spanning tree construction problem that uses a constant number of bits per process. Furthermore, the reuse of known techniques makes this goal unattainable.

\noindent\textbf{Our contribution.}
We introduce the first deterministic self-stabilizing Breath-First Search Spanning Tree construction algorithm that utilizes a constant amount of memory per node, irrespective of network parameters such as degree, diameter, or size. Our algorithm functions in arbitrary topology networks and does not depend on any global knowledge.

Clearly, constructing a self-stabilizing spanning tree without a pre-existing leader using constant memory is impossible~\cite{BFB23}. Therefore, a necessary assumption for achieving constant space per node is that the network is semi-uniform, with a single node designated as the leader, serving as the root of the spanning tree, while all other nodes uniformly execute the same algorithm. Additionally, for synchronization purposes, our solution assumes a fully synchronous network, where all nodes operate at the same pace in a lock-step manner.

Our solution relies on a few key components. 
First, each non-root node maintains a \textit{rank} variable, which classifies neighbors as parents (if their rank is lower), siblings (if their rank is the same), or children (if their rank is higher). In the stabilized phase, the rank value corresponds to the distance to the root modulo three (hence, the BFS property of the constructed spanning tree), an idea already suggested in a classical (that is, non self-stabilizing) setting~\cite{A05,BDLP08}. Identifying the single parent of each node is straightforward: select the parent node among the set of parents whose port number is minimal. However, nodes do \textit{not} explicitly communicate the specific parent they have selected, as this would necessitate $O(\log \Delta)$ bits of memory.
Secondly, the root continuously floods the network with tokens that have a limited lifespan, which helps to circumvent one of the main issues associated with token-based approaches: the infinite circulation of tokens within a cycle. The token's lifespan is controlled by a binary counter, where each bit is stored on a distinct node. So, the tokens generated by the root decay exponentially (half of the tokens disappear at the root's neighbors, half of the remaining ones at the neighbors' downward neighbors, and so on). Tokens are used to destroy cycles and correct inconsistent situations that may appear in arbitrary initial configurations, observing that in synchronous networks, tokens passed deterministically arrive simultaneously at all nodes equidistant from the root (and thus at all neighbors with the same rank for any given node, if the ranks are correct). The exponential decay prevents tokens from looping indefinitely in the network, without relying on any global knowledge.

Overall, our algorithm uses 6 bits of memory per node (or 36 states), and has a stabilization time of $O(2^{\varepsilon})$ time units, where $\varepsilon$ denotes the root eccentricity (unknown to the participating nodes).

\section{Preliminaries}
\label{sec:model}

In this section, we define the underlying distributed execution model considered in this paper, and state what it means for a protocol to be self-stabilizing.  

A \textit{distributed system} is an undirected connected graph, $G=(V,E)$, where $V$ is a set of vertices (or nodes)---$|V|=n,\ n \geq 2$---and $E$ is the set of edges connecting those vertices. 
The distance between two nodes $u$ and $v$, denoted by $d^u_v$ is the length of the shortest path between $u$ and $v$ in $G$. 
Note that since $G$ is not oriented, $d^u_v=d^v_u$, and if $u=v$, then $d^u_v=0$. Then, the eccentricity of a node $v$ is the maximum distance from $v$ to any other node in the network. 

Vertices represent processes, and edges represent bidirectional communication links (that is, the ability for two nodes to communicate directly).  
We assume that each node $v$ is able to distinguish each incident edge with a locally assigned unique label called \textit{port number}. 
Port numbers are immutable and are not coordinated in any way: the edge $(u,v)$ may correspond to port number $k$ at $u$ and to port number $k'\neq k$ at $v$.
Let $N(v)$ be the set of port numbers, also called \textit{set of neighbors} of $v$. 
The \textit{degree} of $v$ denotes $|N(v)|$. 

The program of a node consists of a set of variables and a set of (guarded) rules, of the following form: 
$$<label>:\ <guard>\ \longrightarrow <statement>$$
Each node may write its own variables, and read its own variables and those of its neighboring nodes.
The guard of a rule a predicate on the variables of $v$ and its neighbors. The statement of rule of $v$ updates one or more variables of $v$. 
An rule may be executed only if its guard evaluates to true.   
The rules are atomically executed, so the evaluation of a guard and the execution of the corresponding statement, if any, are done in one atomic step.

The \textit{state} of a node is defined by the values of its variables.
The \textit{configuration} of a system is the product of the states of all nodes. 
Let $\Gamma$ be the set of all possible configurations of the system.
A distributed algorithm (or, protocol) $\mathcal{A}$ is a collection of binary transition
relations, denoted by $\mapsto$, on $\Gamma$ such that given two configurations $\gamma_1$ and $\gamma_2$, 
$\gamma_1 \mapsto \gamma_2$ by the atomic execution of guarded action of one or more nodes.  
A protocol  $\mathcal{A}$ induces an oriented graph $\Gamma^{\mapsto} =(\Gamma, \mapsto)$, called the \textit{transition graph} of $\mathcal{A}$.
A sequence $e=\gamma_0, \gamma_1, \ldots, \gamma_i,\gamma_{i+1},\ldots$, $\forall i\geq 0, \gamma_i \in \Gamma$, is called an \textit{execution} of $\mathcal{A}$ if and only if $\forall i\geq 0, \gamma_{i}\mapsto \gamma _{i+1} $. 

A node $v$ is said to be \textit{enabled} in a configuration $\gamma \in \Gamma$ if there exists a rule $R$ such that 
the guard of $R$ is true at $v$ in $\gamma$.
In this paper, we assume that each transition of $\Gamma^{\mapsto}$ is driven by a \textit{synchronous scheduler}. 
This means that, given a pair of configurations $\gamma_1, \gamma_2 \in \Gamma$, then $\gamma_1\mapsto \gamma_2 \in \Gamma^{\mapsto}$ if and only if all enabled nodes in $\gamma_1$ execute an atomic rule during the transition $\gamma_1 \mapsto \gamma_2$. 

\paragraph{Self-Stabilization.}
A predicate $P$ is \textit{closed} for a transition graph $\Gamma^{\mapsto}$ if and only if every configuration in any execution $e$ starting from a configuration satisfying $P$ also satisfies $P$.  
A predicate $Q$ is an \textit{attractor} of $P$, denoted $P \vartriangleright Q$, if and only if:  
\textit{(i)} $Q$ is closed for $\Gamma^{\mapsto}$, and \textit{(ii)} for every execution $e$ on $\Gamma^{\mapsto}$ beginning in a configuration satisfying $P$, $e$ contains at least one configuration satisfying $Q$.  
A transition graph $\Gamma^{\mapsto}$ is \textit{self-stabilizing} for a predicate $P$ when $P$ is an attractor of the predicate \textit{true} (formally, $\mathit{true} \vartriangleright P$), ensuring convergence to $P$ from any initial configuration.  

\paragraph{Problem to be solved.}

We assume a \textit{rooted graph}, meaning that the set of vertices includes a specific node called the \textit{root}, denoted by $r$.
Let us call $V^r$ the set $V$ with the extra property that each node $v \in V$ is labeled with $d^r_v$. 
For every node $v \neq r$, let $Pred_v$ be the
set of neighboring nodes of $v$ in $V^r$ that are labeled with $d^r_v - 1$.  
Consider the subset $E^r$ be the subset of edges ($E^r \subset E$) such that for every edge $uv \in E^r$, either $u$ belongs to $Pred_v$, or $v$ belongs to $Pred_u$.
So, in $G^r=(V^r,E^r)$, no two neighbors have the same label. 
Then $\overrightarrow{G^r}$ is the directed version of $G^r$ every edge is oriented toward the smaller label. 
By construction, $\overrightarrow{G^r}$ is a Directed Acyclic Graph (DAG) rooted in $r$ such that every decreasing label path from every $v \in V^r$ is a path of minimum length between $v$ and $r$. Note that this DAG is uniquely defined. In the remainder of the paper, we refer to $\overrightarrow{G^r}$ as $BFS\mbox{-}DAG$ (rooted at $r$).  

Let $BFS^{\equiv 3}$ be the projection of the $BFS\mbox{-}DAG$, where each vertex $v$ is labeled with $d^r_v \mod 3$.
The problem we consider in this paper consists in the design of a self-stabilizing distributed algorithm that builds the $BFS^{\equiv 3}$ using a constant amount of memory. From this construction, one can retrieve the BFS tree by choosing for each node (except the root) the parent port number of minimum value.


\section{Algorithm}
\subsection{Overview of the algorithm}
Our algorithm formally described in Algorithm~\ref{algo:TokBin} aims to build a distributed version of $BFS^{\equiv 3}$, meaning each node $v$ computes its distance to the root $r$ modulo $3$. 
Our algorithm makes use of an infinite flow of tokens originating from the root downwards the BFS-DAG. Of course, since the initial configuration is arbitrary, the nodes' variables may not originally match the $BFS^{\equiv 3}$, so the tokens may not initially flow according to the BFS-DAG.

For this algorithm to function properly, it is crucial that only the root \textit{creates} tokens. When the root creates a token, all of its neighbors take that token in the next synchronous step, which implies that after one step, there are as many tokens as neighbors of the root. 
This process is then repeated from parents to children (according to the distance modulo $3$ variables).
Since our model is synchronous, in each step, any tokens originating from the same root created token remain at the same distance from the root.

A node $v$ updates its distance modulo $3$ based on the distance (modulo $3$) of the neighbors from which it receives one or more tokens at the same synchronous step; if multiple neighbors send tokens, $v$ merges them into a single token.

A common issue when using tokens to construct a spanning tree starting from an arbitrary configuration is the possibility of infinite token circulation within cycles, which prevents a valid spanning tree from forming. 
This problem may occur if tokens (not induced by a root created token) are already present in the initial configuration. 
To address this, our algorithm ensures that each token has a limited lifespan, thereby avoiding infinite circulation. This limited lifespan is based on the following idea: each node maintains a bit, whose value indicate whether the token should be forwarded downwards; each time a token is received, the bit is flipped, effectively dropping half of the tokens that flow through the node. 
Globally, this process can be seen as creating a binary word along the paths from the root to the nodes. Tokens move along these paths, effectively performing binary addition one by one. 
A node $v$ is allowed to collect a token if and only if all upward nodes having sent their tokens have their bit set to 1 before the addition. Otherwise, node $v$ does not forward the tokens, leading to their disappearance.

Consider an execution of this approach on a chain of length $\ell$, where all nodes $v_1$ through $v_{\ell}$ have their bits set to 0. Consequently, the binary word of length $\ell$ is $0000\dots00$, note that the root does not contribute to this binary word. For simplicity, assume that a token moves from one neighbor to the next in just one step.
If the root offers a token, its neighbor $v_0$  does not pass the token to its neighbor $v_1$ but changes its bit to $1$. 
The resulting binary word is $1000\dots00$. Next, when $v_1$ acquires the token, it can pass the token to $v_2$. At this point, $v_1$ changes its bit to $0$ and $v_2$ changes its bit to $1$, but the token is not forwarded to $v_3$ and thus disappears. The resulting binary word is $0100\dots00$. This process continues accordingly.
From this scenario, observe that the token’s lifespan is very short and that for the node farthest from the root to receive the token, it may have to wait $2^\ell$ steps.

This approach also circumvents one of the major issues in self-stabilizing spanning tree construction: the spanning tree is built here from the root toward the other nodes, and the root never waits for information from the spanning tree—typically expected from leaves up to the root—thus avoiding deadlock when no leaves exist (\textit{i.e.}, when only cycles map the network).

Although our algorithm is designed so that the distance modulo $3$ of each node is eventually stable (\textit{i.e.}, eventually remains unchanged), Algorithm~\ref{algo:TokBin} is not silent.  Indeed, even when the distributed version of $BFS^{\equiv 3}$ is built, an underlying token-passing mechanism keeps operating indefinitely. 

\subsection{Variables}

Each node $v$ maintains three variables.
A variable that, upon stabilization of the algorithm, stores the distance to the root modulo $3$, this variable, called the \textbf{rank}, is denoted by \textbf{$\rk$}. 
Consequently, for each node $v$, the variable $\rk_v$ takes values in $\{0,1,2,\bot\}$. As explained in the sequel, the extra value $\bot$ is used to handle errors. 
When the system is stabilized, \textit{i.e.}, Variable $\rk_v$ is supposed to induce a kinship relationship matching $BFS^{\equiv3}$. 
However, in the arbitrary initial configuration, this variable may be incorrect, affecting the relationship that should normally exist to form $BFS^{\equiv3}$. 
Hence, cycles (induced by the parent relationships in the graph $G$) and incorrect DAGs (\textit{i.e.}, DAGs rooted on nodes other than the root, or with multiple roots) may exist in an initial configuration. 

Therefore, we add a token diffusion mechanism to stabilize $\rk_v$. The circulation of tokens is handled by the variable \textbf{$t$}. This variable takes values in $\{\mathit{true},\mathit{false},\mathit{wait}\}$. The states $\mathit{true}$ and $\mathit{false}$ indicate whether a node holds a token or not. The $\mathit{wait}$ state serves two purposes: it prevents a node that already holds a token from immediately receiving another token, and it informs a token-holding node about which node(s) sent it the token. After stabilization, nodes cycle through the states $\mathit{false}$, $\mathit{true}$, and $\mathit{wait}$, in that order. A node with a $\mathit{true}$ token moves to a $\mathit{wait}$ token in the subsequent step, while a node with a $\mathit{wait}$ token moves to a $\mathit{false}$ token in the subsequent step. 
A node token may remain in the $\mathit{false}$ state for more than one step.

The last local variable, called \textbf{bit}, and denoted by $b$, is used to construct binary words originating at the root. 
For each node $v$, the variable $b_v$ can take values in $\{\bz,\bo,\top\}$. The states $\bz$ and $\bo$ are the normal states of a bit, and as explained further, $\top$ is used to manage a reset mechanism when an error is detected locally. 

Note that those three variables are also present at the root $r$. But, they are considered as constant with the following values: $\rk_r=0$, $t_r=\mathit{true}$, and $b_r=\bo$. The root has no rules to execute. Note that since the root has $t_r=\mathit{true}$ and $b_r=\bo$, it \textit{continuously} proposes a token to its neighbors.

\subsection{Parent-child relationships sets and predicates}
In  the following, we denote by $N(v)$ the set of neighbors of node $v$, and by $pt_{(v,u)}$ the port number of $v$ leading to $u$. The parents relationship is defined by the rank variable (\textit{i.e.,} $\rk$) of each node. Let us denote by $P(v)$ the set of $v$'s parents:
$P(v)=\{u\in N(v)|\rk_u=(\rk_v-1)\mod 3\}$.

To obtain a BFS spanning tree, all nodes must choose a single parent among its parents' set $P(v)$. We use the minimum port number to break such symmetric cases.

\begin{equation*}
p(v)=
\begin{cases}
    u|w\in P(v) \wedge pt_{(v,u)}=\min\{pt_{(v,w)}\} &\text{if } P(v)\neq \emptyset\\
    \emptyset &\text{otherwise}
\end{cases}   
\end{equation*}

Children of node $v$, noted $C(v)$, are formally defined as:
$C(v)= \{u\in N(v)|\rk_u=(\rk_v+1)\mod 3\}$

Let  $S(v)$ be the set of siblings of node $v$ defined as follow:
$S(v)= \{u\in N(v)|\rk_u=\rk_v\}$

\subsection{Algorithm $\mathtt{TokBin}$}
\begin{algorithm*}[!htbp]
\noindent
\[
\begin{array}{lcl}
R_{er}&:&  b_v\neq\top \wedge Er(v)  \longrightarrow t_v:=\mathit{wait};b_v:=\top\\\\
R_{reset}&:&  \rk_v\neq\bot \wedge b_v=\top  \longrightarrow  \rk_v:=\bot;t_v:=\mathit{false};b_v:=\bz\\\\
R_{erRank}&:& \neg Er(v) \wedge \rk_v\neq\bot \wedge b_v\neq\top \wedge t_v=\mathit{false} \wedge  \mathit{TakeO}(v) \longrightarrow t_v:=\mathit{wait};b_v:=\top\\\\
R_{join}&:& \neg Er(v) \wedge \rk_v=\bot \wedge  \mathit{TakeR}(v) \longrightarrow \rk_v:=\rk_t(v);b_v:=\bo \\\\
\hrulefill\\
R_{tok}&:& \neg Er(v) \wedge \rk_v\neq\bot \wedge t_v=\mathit{false}  \wedge \mathit{TakeP}(v)   \longrightarrow t_v:=\mathit{true} \\\\
R_{add}&:& \neg Er(v) \wedge t_v=\mathit{true} \longrightarrow t_v:=\mathit{wait};b_v:=(b_v+1)\mod 2  \\\\
R_{ready}&:& \neg Er(v) \wedge t_v=\mathit{wait} \longrightarrow t_v:=\mathit{false} \\\\
\end{array}
\]
\caption{Algorithm $\mathtt{TokBin}$ for Node~$v$}
\label{algo:TokBin}
\end{algorithm*}

Our algorithm $\mathtt{TokBin}$ (see Algorithm~\ref{algo:TokBin}) is composed of seven rules. Note that our approach is not silent: once the system has stabilized, rules \Rtok, \Radd, and \Rready\/ are activated infinitely often by all nodes in the network. The others rules are executed in the stabilizing phase only. 

The predicates used in the algorithm are formally defined in the following section. We use the state $\top$ of variable $b_v$ to indicate that a node $v$ is in an error state. Consequently, rule \Rer\/ detects local errors: a node $v$ that is not in error yet observes an error (predicate $Er(v)$) activates this rule to transition to an error state. In the next step, a node already in an error state resets itself via rule \Rreset.

For our algorithm to function correctly, it is necessary to clean as much as possible of the network when an error is detected. So, when an error is detected when executing rule \Rer, the predicate $Er(v)$ of its neighbors also becomes true, allowing the  the error to propagate when the neighbors also execute rule \Rer. 
This process may not propagate to the entire network if the root is an articulation point of the graph (that is, the graph minus the root node $G^*$ is disconnected). However, the error does propagate to the entire connected component of $G^*$.

Rules \Rtok, \Radd, and \Rready\/ govern token circulation. The system is synchronous, and once the algorithm converges, node ranks remain unchanged. 
Therefore, when tokens circulate, a node $v$ must receive its token(s) from every parent $u$ satisfying $b_u = 1$. These conditions are captured by the predicate $\mathit{TakeP}(v)$, described formally later. If $\mathit{TakeP}(v)$ is true, $v$ executes rule \Rtok\/ to acquire a token (possibly merging its parents' tokens).
When a node has a token, its children must retrieve it if its bit is $\bo$; otherwise, the token disappears. 
In either case, the node holding the token releases it by executing rule \Radd, which also increments its $b$ variable. 
Note that, to enable neighbors with tokens to detect which nodes have sent one token, rule \Radd\/ sets the token variable to $\mathit{wait}$ rather than directly setting it to $false$. This delay simplifies the proof of the algorithm.
Finally, after sending a token and incrementing its variable $b_v$, node $v$ can execute rule \Rready\/ to indicate that it is ready to receive a new token. Figure~\ref{fig:tok} illustrates a possible execution of these three rules.

Observe that a node $v$ must receive token(s) from all of its parents to maintain the parent-child relationships; otherwise, $v$ detects an error. This error is captured by predicate $\mathit{TakeO}(v)$, described formally below. In such a case, $v$ executes rule \RerRank\/ to declare itself in an error state.

The last rule \Rreach, concerns nodes that have been reset, and therefore do not hold a rank. 
A reset node $v$ that is offered a token by a set of nodes $P'$—all sharing the same rank (predicate $\mathit{TakeR}(v)$) —joins the spanning structure by taking the nodes in $P'$ as its parents. To do so, $v$ adjusts its rank based on the nodes in $P'$ and also acquires a token.
\begin{figure}
    \centering
    \begin{subfigure}{0.05\textwidth}
        \centering
       \begin{tikzpicture}
    \node[circle, draw, minimum size=1cm, fill=lightgray] (v) at (0,0) {\footnotesize$\mathit{true},\bo$};
    \node[below of=v, circle, node distance=0.3cm, fill=black] {};
    \node[right of=v, node distance=0.8cm] {$v$};
    \node[below left of=v, node distance=0.9cm] {\footnotesize\Radd};
    \node[circle, draw, minimum size=1cm, fill=lightgray] (u) at (0,-2) {\footnotesize$\mathit{false},\bo$};
    \node[right of=u, node distance=0.8cm] {$u$};
    \node[below left of=u, node distance=0.9cm] {\footnotesize \Rtok};
    \node[circle, draw, minimum size=1cm] (w) at (0,-4) {\footnotesize$\mathit{false},\bo$};
    \node[right of=w, node distance=0.8cm] {$w$};
    \node[circle, draw, minimum size=1cm] (x) at (0,-6) {\footnotesize$\mathit{false},\bz$}; 
    \node[right of=x, node distance=0.8cm] {$x$};
     \node[circle, draw, minimum size=1cm] (y) at (0,-8) {\footnotesize$\mathit{false},\bo$}; 
    \node[right of=y, node distance=0.8cm] {$y$};
    \draw[->] (u) to (v);
    \draw[->] (w) to (u);
    \draw[->] (x) to (w); 
    \draw[->] (y) to (x); 
\end{tikzpicture}

        \caption{}
    \end{subfigure}
    \hspace{1cm}
     \begin{subfigure}{0.05\textwidth}
        \centering
       \begin{tikzpicture}
    \node[circle, draw, minimum size=1cm, fill=lightgray] (v) at (0,0) {\footnotesize$\mathit{wait},\bz$};
    \node[right of=v, node distance=0.8cm] {$v$};
    \node[below left of=v, node distance=0.9cm] {\footnotesize \Rready};
    \node[circle, draw, minimum size=1cm, fill=lightgray] (u) at (0,-2) {\footnotesize$\mathit{true},\bo$};
    \node[right of=u, node distance=0.8cm] {$u$};
    \node[below of=u, circle, node distance=0.3cm, fill=black] {};
    \node[below left of=u, node distance=0.9cm] {\footnotesize \Radd};
    \node[circle, draw, minimum size=1cm,fill=lightgray] (w) at (0,-4) {\footnotesize$\mathit{false},\bo$};
    \node[right of=w, node distance=0.8cm] {$w$};
    \node[below left of=w, node distance=0.9cm] {\footnotesize \Rtok};
    \node[circle, draw, minimum size=1cm] (x) at (0,-6) {\footnotesize$\mathit{false},\bz$}; 
    \node[right of=x, node distance=0.8cm] {$x$};
    \node[circle, draw, minimum size=1cm] (y) at (0,-8) {\footnotesize$\mathit{false},\bo$}; 
    \node[right of=y, node distance=0.8cm] {$y$};
    \draw[->] (u) to (v);
    \draw[->] (w) to (u);
    \draw[->] (x) to (w); 
    \draw[->] (y) to (x);
        \end{tikzpicture}
        \caption{}
    \end{subfigure}
    \hspace{1cm}
     \begin{subfigure}{0.05\textwidth}
        \centering
        \begin{tikzpicture}
             \node[circle, draw, minimum size=1cm] (v) at (0,0) {\footnotesize$\mathit{false},\bz$};
    \node[right of=v, node distance=0.8cm] {$v$};
    \node[circle, draw, minimum size=1cm, fill=lightgray] (u) at (0,-2) {\footnotesize$\mathit{wait},\bz$};
    \node[right of=u, node distance=0.8cm] {$u$};
    \node[below left of=u, node distance=1cm] {\footnotesize\Rready};
    \node[circle, draw, minimum size=1cm,fill=lightgray] (w) at (0,-4) {\footnotesize$\mathit{true},\bo$};
    \node[right of=w, node distance=0.8cm] {$w$};
     \node[below of=w, circle, node distance=0.3cm, fill=black] {};
    \node[below left of=w, node distance=1cm] {\footnotesize\Radd};
    \node[circle, draw, minimum size=1cm,fill=lightgray] (x) at (0,-6) {\footnotesize$\mathit{false},\bz$}; 
    \node[right of=x, node distance=0.8cm] {$x$};
    \node[below left of=x, node distance=1cm] {\footnotesize \Rtok};
    \node[circle, draw, minimum size=1cm] (y) at (0,-8) {\footnotesize$\mathit{false},\bo$}; 
    \node[right of=y, node distance=0.8cm] {$y$};
    \draw[->] (u) to (v);
    \draw[->] (w) to (u);
    \draw[->] (x) to (w); 
    \draw[->] (y) to (x);
        \end{tikzpicture}
        \caption{}
    \end{subfigure}
     \hspace{1cm}
     \begin{subfigure}{0.05\textwidth}
        \centering
        \begin{tikzpicture}
    \node[circle, draw, minimum size=1cm] (v) at (0,0) {\footnotesize$\mathit{false},\bz$};
    \node[right of=v, node distance=0.8cm] {$v$};
    \node[circle, draw, minimum size=1cm] (u) at (0,-2) {\footnotesize$\mathit{false},\bz$};
    \node[right of=u, node distance=0.8cm] {$u$};
    \node[circle, draw, minimum size=1cm,fill=lightgray] (w) at (0,-4) {\footnotesize$\mathit{wait},\bz$};
    \node[right of=w, node distance=0.8cm] {$w$};
    
    \node[below left of=w, node distance=1cm] {\footnotesize\Rready};
    \node[circle, draw, minimum size=1cm,fill=lightgray] (x) at (0,-6) {\footnotesize$\mathit{true},\bz$}; 
    \node[right of=x, node distance=0.8cm] {$x$};
    \node[below left of=x, node distance=1cm] {\footnotesize\Radd};
     \node[below of=x, circle, node distance=0.3cm, fill=black] {};
    \node[circle, draw, minimum size=1cm] (y) at (0,-8) {\footnotesize$\mathit{false},\bo$}; 
    \node[right of=y, node distance=0.8cm] {$y$};
    \draw[->] (u) to (v);
    \draw[->] (w) to (u);
    \draw[->] (x) to (w); 
    \draw[->] (y) to (x);
        \end{tikzpicture}
        \caption{}
    \end{subfigure}
     \hspace{1cm}
     \begin{subfigure}{0.05\textwidth}
        \centering
        \begin{tikzpicture}
    \node[circle, draw, minimum size=1cm] (v) at (0,0) {\footnotesize$\mathit{false},\bz$};
    \node[right of=v, node distance=0.8cm] {$v$};
    \node[circle, draw, minimum size=1cm] (u) at (0,-2) {\footnotesize$\mathit{false},\bz$};
    \node[right of=u, node distance=0.8cm] {$u$};
    \node[circle, draw, minimum size=1cm] (w) at (0,-4) {\footnotesize$\mathit{false},\bz$};
    \node[right of=w, node distance=0.8cm] {$w$};
    \node[circle, draw, minimum size=1cm,fill=lightgray] (x) at (0,-6) {\footnotesize$\mathit{wait},\bo$}; 
    \node[right of=x, node distance=0.8cm] {$x$};
    \node[below left of=x, node distance=1cm] {\footnotesize \Rready};
    \node[circle, draw, minimum size=1cm] (y) at (0,-8) {\footnotesize$\mathit{false},\bo$}; 
    \node[right of=y, node distance=0.8cm] {$y$};
    \draw[->] (u) to (v);
    \draw[->] (w) to (u);
    \draw[->] (x) to (w); 
    \draw[->] (y) to (x);
        \end{tikzpicture}
        \caption{}
    \end{subfigure}
    \caption{The gray nodes in each subfigure represent activatable nodes. (a) Node $v$ possesses and offers a token. Node $u$ acquires the token by applying the rule \Rtok, and node $v$ increments its bit using \Radd. (b)~Node $v$ applies rule \Rready\/ to indicate that it is ready to receive a new token. Node $u$ holds the token, it releases the token and increments its bit by applying \Radd. Simultaneously, the node $w$ executes rule \Rtok\/ to acquire the token. (d) Node $x$ holds the token; however, node $y$ does not accept it because node $x$ has $b_x = \bz$.}
    \label{fig:tok}
\end{figure}
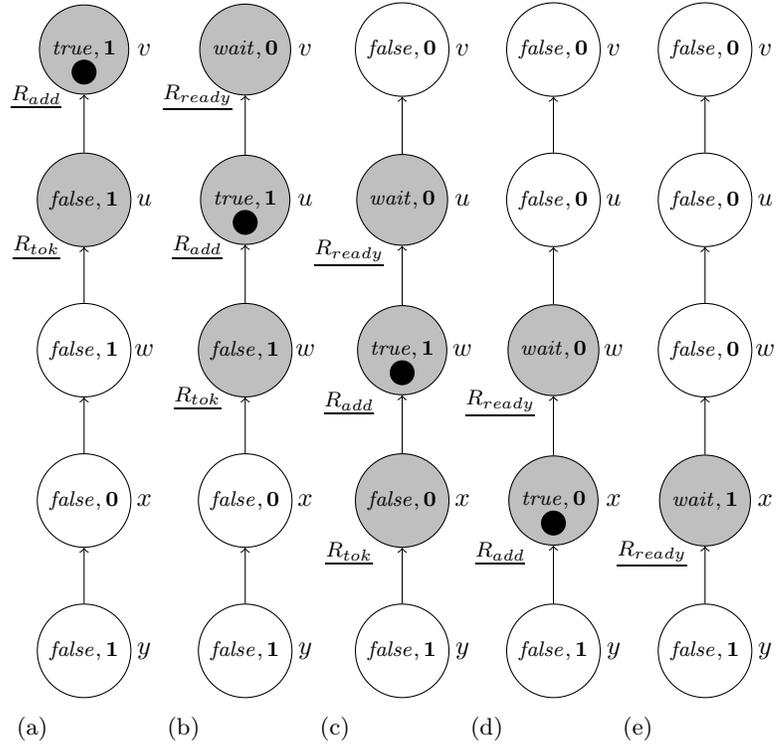

\subsection{Token reception sets and predicates}
In the remainder, any node satisfying predicate $Reset(v)$ is referred to as a reset node: 
\begin{equation}
    Reset(v)\equiv \rk_v=\bot \wedge t_v=\mathit{false} \wedge b_v=\bz
\end{equation}
In our approach, a node $v$ may receive tokens from a neighbor $u$ if and only if the neighbor’s binary variable satisfies $b_u = \bo$. The set $\mathit{Tok}(v)$ represents the neighbors of $v$ that may send a token to $v$.
\begin{equation}
\label{eq:Take}
    \mathit{Tok}(v)=\{u\in N(v)|t_u=\mathit{true} \wedge b_u=\bo \}
\end{equation}

For our purpose, a distinction is made based on which node transmits the token to $v$. The two following predicates apply when node $v$ has a valid rank (\textit{i.e.,} $\rk_v\neq \bot$), allowing it to identify its parent set. The predicate $\mathit{TakeP}(v)$ holds if all proposed tokens are from all parents of $v$, while predicate $\mathit{TakeO}(v)$ holds if at least one token sender is not the parent of $v$, or if not all parents of $v$ propose a token.
\begin{equation}
\label{eq:TakeP}
    \mathit{TakeP}(v)\equiv \mathit{Tok}(v)\neq \emptyset \wedge \mathit{Tok}(v)=P(v) 
\end{equation}
\begin{equation}
\label{eq:TakeO}
    \mathit{TakeO}(v)\equiv \mathit{Tok}(v)\neq \emptyset  \wedge \mathit{Tok}(v)\neq P(v)  
\end{equation}
The final predicate in this section is used by a reset node $v$. It verifies that all token sender nodes share the same rank, and that no node outside the token sender set possesses that same rank.
\begin{equation*}
    \mathit{Tok}_{rk}(v)=\min \{\rk_u|u\in \mathit{Tok}(v)\}
\end{equation*}
\begin{equation}
\label{eq:TakeR}
    \mathit{TakeR}(v)\equiv \mathit{Tok}(v)\neq \emptyset  \wedge (\forall u,w\in \mathit{Tok}(v)|\rk_u=\rk_w ) \wedge \left(\forall u \in N(v)\setminus \mathit{Tok}(v)| \rk_u\neq \mathit{Tok}_{rk}(v)\right)
\end{equation}

When a new reset node $v$ satisfies predicate $\mathit{TakeR}(v)$, it can join the spanning structure by becoming a child of nodes in $Tok(v)$. In doing so, $v$ adjust its rank accordingly, using function $k_t(v)$:
\begin{equation}
    \rk_t(v)= (\{\rk_u| u\in \mathit{Tok}(v)\}+1)\mod 3
\end{equation}

\subsubsection{Error Detection sets and predicates}
Let us introduce some predicates for all nodes in $V\setminus\{r\}$.
Inconsistencies between variables can arise due to a corrupted initial configuration, necessitating detection by the algorithm.
A node $v$ with an invalid rank ($\rk_v = \bot$) is considered a reset node, and thus all its variables must reflect the state of a reset node. When a node $v$ detects an error, it sets $b_v = \top$. After detecting an error, the decision is made to clean the network. Consequently, any non-reset node that has an erroneous neighbor transitions to an error state.
In other words, errors propagate from a node to its neighborhood, as captured by the following predicate.
\begin{equation}
    Er_{prg}(v)\equiv \exists u \in N(v)|\rk_u\neq \bot \wedge b_v=\top
    \label{eq:erprg}
\end{equation}

Predicate $Er_{var}(v)$ detects inconsistencies between variables of a reset node.
\begin{equation}
    Er_{var}(v)\equiv \neg Er_{prg}(v) \wedge b_v\neq \top \wedge (\rk_v=\bot \wedge (t_v\neq \mathit{false} \vee b_v\neq 0))
    \label{eq:erRst}
\end{equation}
If a node $v$ has at least one child with a token, and $v$ has not sent a token (i.e., $\neg(t_v=\mathit{wait}\wedge b_v=\bz)$), then $v$ detects an error thanks to predicate $Er_{tp}(v)$. 
\begin{equation}
    Er_{tp}(v)\equiv \neg Er_{prg}(v) \wedge b_v\neq \top   \wedge C(v)\neq \emptyset \wedge (\exists u\in C(v)|t_u=\mathit{true})  \wedge  \neg(t_v=\mathit{wait} \wedge b_v=\bz)
    \label{eq:ertp}
\end{equation}

We deal with an synchronous scheduler, and we construct a $BFS^{\equiv 3}$. 
As a consequence, all parents of a node $v$ must have the same state (see predicate $ErP(v)$), all children of $v$ must share the same state (see predicate $ErC(v)$), and all siblings of $v$ must be in the same state as $v$ (see predicate $ErS(v)$).
Moreover, a node that has a valid rank, children and no token must not have any reset nodes as neighbors (see predicate $Rt\_nd(v)$). In a normal execution of the algorithm, when a node offers the token, all nodes that do not have a valid rank join it as children. Consequently, a node cannot end up with both children and neighbors lacking a valid rank. All these constrains are captured by the following predicates.
\begin{equation}
    ErP(v)\equiv (\exists u,w \in P(v)|b_u\neq b_w \vee t_u\neq t_w) \vee (\rk_v\neq \bot \wedge P(v)=\emptyset)
\label{eq:ErP}
\end{equation}
\begin{equation}
    ErC(v)\equiv \exists u,w \in C(v)|b_u\neq b_w\vee t_u\neq t_w
\label{eq:ErC}
\end{equation}
\begin{equation}
    ErS(v)\equiv \exists u \in S(v)|b_u\neq b_v\vee t_u\neq t_v 
\label{eq:ErS}
\end{equation}
\begin{equation}
    Rt\_nd(v)\equiv \rk_v\neq \bot \wedge C(v)\neq \emptyset \wedge t_v\neq \mathit{true} \wedge (\exists u\in N(v)\mid \rk_u=\bot)
\label{eq:Rtnd}
\end{equation}
\begin{equation}
Er_N(v)\equiv \neg Er_{prg}(v) \wedge b_v\neq \top \wedge \neg \mathit{TakeO}(v) \wedge  \big(ErP(v) \vee  ErC(v) \vee ErS(v) \vee Rt\_nd(v)\big)    
\label{eq:erN}  
\end{equation}
A node $v$ with a valid rank ($\rk_v\neq \bot$) must have a non empty set of parents, otherwise the node is in error:
\begin{equation}
    FalseR(v)\equiv \rk_v\neq \bot \wedge P(v)=\emptyset
\end{equation} 
Neighbors $u$ and $w$ of a reset node $v$ with the same rank must have identical values for variables $b$ and $t$. Otherwise, $v$ detects an error and enters the error state (i.e., $b_v = \top$) so that, in the next step, nodes $u$ and $w$ can also transition to the error state. Predicate $Er_\pi(v)$ is dedicated to this purpose:
\begin{equation}
    Er_\pi(v) \equiv  Reset(v)\wedge \left(\forall u,w \in N(v)|\rk_u=\rk_w \wedge (b_u\neq b_w\vee t_u\neq t_w)\right)
\end{equation}

A node $v$ can then apply the rest of the rules of the algorithm if it does not find an error in its neighborhood, and it is not in an error state itself. Predicate $Er(v)$ captures this. 
\begin{equation}  
    Er(v)\equiv  Er_{var}(v)  \vee  Er_{tp}(v) \vee  Er_{N}(v) \vee FalseR(v) \vee Er_{prg}(v) \vee Er_\pi(v)
    \label{eq:er}
\end{equation}

\section{Correctness}

We now state our main technical result.

\begin{theorem}\label{thm:main}
In a semi-uniform model where $r$ is the distinguished node, and every other node is anonymous, our synchronous deterministic self-stabilizing algorithm~$\mathtt{TokBin}$ constructs a $BFS$ tree rooted in $r$ using $O(1)$ bits per node in $O(2^{\varepsilon})$ steps, where $\varepsilon$ denotes the eccentricity of $r$. 
\end{theorem}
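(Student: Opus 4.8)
The plan is to separate the three assertions of the theorem---the $O(1)$ memory bound, the correctness of the constructed $BFS^{\equiv 3}$, and the $O(2^{\varepsilon})$ convergence time---and to dispatch them in increasing order of difficulty. The memory bound is immediate: each node stores $\rk_v\in\{0,1,2,\bot\}$, $t_v\in\{\mathit{true},\mathit{false},\mathit{wait}\}$, and $b_v\in\{\bz,\bo,\top\}$, for a total of $4\cdot 3\cdot 3=36$ states, i.e.\ $\lceil\log_2 36\rceil=6$ bits, independent of $n$, $\Delta$, and the diameter. The remaining work is to exhibit a legitimate predicate $\mathcal{L}$ characterizing the configurations in which every $\rk_v$ equals $d^r_v\bmod 3$ and the token variables cycle in lock-step, and then to establish $\mathit{true}\vartriangleright\mathcal{L}$ within the attractor framework of Section~\ref{sec:model}.

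For \emph{closure} I would check that from any configuration satisfying $\mathcal{L}$ no error predicate $Er(v)$ can fire, so that only \Rtok, \Radd, \Rready\ remain enabled; a direct verification that these three rules preserve the synchronized $\mathit{false}\!\to\!\mathit{true}\!\to\!\mathit{wait}$ cycle---using that, under the synchronous scheduler, all nodes equidistant from $r$ act simultaneously---shows $\mathcal{L}$ is closed and simultaneously formalizes the claim that \texttt{TokBin} is talkative rather than silent.

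\emph{Convergence} I would prove in two stages. First, a \textbf{cleaning} stage: I would show that every configuration violating $\mathcal{L}$ contains a node whose $Er(\cdot)$ holds, so \Rer\ fires, and that the resulting error cascades through \Rer\ and \Rreset\ across the whole connected component of $G^*$ (that is, $G$ with $r$ removed) in which the inconsistency lies, formalizing the propagation claim of the overview as a lemma. Second, a \textbf{rebuilding} stage: once a region is reset, I would induct on the distance $d$ from $r$, proving the attractor chain $\mathcal{L}_d\vartriangleright\mathcal{L}_{d+1}$, where $\mathcal{L}_d$ asserts that all nodes at distance $\le d$ carry their correct rank and are locked into the token cycle. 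The base case uses that $r$ permanently offers a token ($t_r=\mathit{true}$, $b_r=\bo$); the inductive step uses the synchronous arrival of root-originated tokens together with predicate $\mathit{TakeR}$---which lets a reset node adopt rank $\rk_t(v)$ only when all token senders share one rank and no other neighbor holds that rank---to guarantee that each node joins at its \emph{true} BFS distance rather than along a longer path, which is exactly what yields the shortest-path (BFS) property.

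The main obstacle I anticipate is the cleaning stage, and within it the case of a parent-consistent cycle (ranks cycling $0,1,2,\dots$ around a cycle whose length is divisible by $3$), which has no leaf to seed the cascade and which is locally consistent while static. Here I would argue in two movements: since only $r$ creates tokens, any tokens trapped in such a cycle are annihilated by the binary-counter mechanism within a bounded number of steps; and as correct root-originated tokens subsequently reach the region at the true distances, the lock-step synchrony forces a mismatch that fires one of $Er_{tp}$, $\mathit{TakeO}$, or the equidistance-consistency predicates $ErP/ErC/ErS$, triggering the reset. The second, related difficulty is to rule out a non-terminating oscillation between cleaning and rebuilding---a freshly reset node must not be perpetually re-corrupted by stale tokens or inconsistent neighbors. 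I would control this by a potential/timing argument tied to the counter: because spurious tokens decay exponentially and none are injected except from $r$, after $O(2^{d})$ synchronous steps every distance-$d$ region is either already correct or fully reset and fed only by root tokens. Summing the layered bound over $d\le\varepsilon$---and using that the counter forces tokens to take $\Theta(2^{d})$ steps to first reach distance $d$, exactly as in the length-$\ell$ chain example---delivers the $O(2^{\varepsilon})$ stabilization time and completes the proof.
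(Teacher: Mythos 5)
Your plan has the right ingredients (attractor chain, token decay, lock-step synchrony), but two of its load-bearing steps fail as stated. First, the cleaning-stage lemma---``every configuration violating $\mathcal{L}$ contains a node whose $Er(\cdot)$ holds''---is false, and not only for rank-consistent cycles: Figure~\ref{fig:badPath}(a) of the paper exhibits a configuration in which a node violates $Legal^\pi$ while no node anywhere satisfies any error predicate. In this algorithm illegality is in general \emph{not} statically detectable; it becomes detectable only dynamically, when root-originated tokens reach the inconsistent region and fire $\mathit{TakeO}$, $Er_{tp}$ or $Er_\pi$. Consequently your patch (``tokens force a mismatch'') is not the repair of a corner case, it \emph{is} the proof, and it needs a global precondition that your plan never establishes as a separate stage: that every token not originating from the root has died. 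The paper makes this its first substantive step (Lemma~\ref{lem:CleanBadToken}: $\Gamma_{\bar e}\triangleright\Gamma_{tr}$ in $O(n)$ steps, via the shrinking structures $\Ss(\gamma,v)$), and only then can it prove that any token holder has a legal rank (Lemma~\ref{lem:disRoot}), which is what turns token arrival into a reliable error detector. Without this ordering your rebuilding stage is circular: a freshly reset node can be captured by a spurious token and join a wrong structure, and $\mathit{TakeR}$---a purely local test---cannot prevent this; the BFS property you attribute to $\mathit{TakeR}$ really comes from Lemma~\ref{lem:disRoot} together with the legal-or-reset and path-synchronization facts (Corollaries~\ref{cor:legalReset} and~\ref{cor:legalpath}).

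Second, your inductive chain $\mathcal{L}_d\triangleright\mathcal{L}_{d+1}$ cannot hold because $\mathcal{L}_d$ is not closed: error propagation ignores distance. If any node at distance $d+1$ enters the error state, its valid-rank neighbors at distance $d$ satisfy $Er_{prg}$, execute \Rer\ and then \Rreset, tearing down the very layer you declared locked. So convergence cannot be certified layer-by-layer in space; the phases must be separated globally in \emph{time}. That is precisely the paper's decomposition $\Gamma\triangleright\Gamma_{\bar e}\triangleright\Gamma_{tr}\triangleright\Gamma_{cl}\triangleright\Gamma^*$, where $\Gamma_{cl}$ (every node legal or reset, with synchronized paths) is the waypoint after which no error predicate can ever fire again; only there does your layered intuition become valid, and even then the paper certifies it not by induction on $d$ but by potential functions ($\varrho$, then $\Xi$) whose $\min\{E(\gamma,v),\beta(\gamma,v)\}$ terms absorb exactly the cleaning-versus-rebuilding races that your sketch defers to the unproven claim that ``after $O(2^{d})$ steps every distance-$d$ region is either already correct or fully reset.''
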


\subsection{Overview of the correctness}
To establish space complexity, we simply observe that each node $v\in V$ maintains three local variables: $\rk_v,t_v$, and $b_v$. Together, those variables require only 6 bits, ensuring a per-node space complexity of $O(1)$ bits. 
The correctness proof is carried out in several stages. First, we demonstrate—using a potential function—that starting from an arbitrary configuration, our system converges and remains within a subset of configurations denoted by $\Gamma_{\bar{e}}$. More precisely, $\Gamma_{\bar{e}}$ is the set of configurations where obvious errors $Er_{var}(\gamma_0,v)=\mathit{true}$, $Er_{tp}(\gamma_0,v)=\mathit{true}$, or $Er_{N}(\gamma_0,v)=\mathit{true}$ (see Equations~\ref{eq:erRst}, \ref{eq:ertp}, and \ref{eq:erN}) are no longer present.

Afterwards, we define the set of legal configurations $\Gamma^*$. A configuration is considered legal if, for every node $v$, $v$'s rank is equal to its distance from the root in the graph modulo 3. Moreover, we must ensure that the nodes' ranks no longer change. In our synchronous setting, this requires that all shortest paths from the root to any node $v$ are identical with respect to the nodes' states on the paths. In other words, every node $v$ at the same distance from the root must be in the same state (\textit{i.e.}, has identical values for variables $t_v$ and $b_v$).

Then, we need to prove that, \textit{(i)} starting from a possibly illegal (but free of obvious errors) configuration in $\Gamma_{\bar{e}}$, the system converges to a legal configuration, and \textit{(ii)} the system remains in the set of legal configurations $\Gamma^*$, thanks to our algorithm. 
Note that in $\Gamma_{\bar{e}}$, only the rules \Rtok, \Radd, and \Rready\/ are executable—these are the rules that govern token circulation.

The second part of the proof (closure) is established by observing that at each step of the algorithm execution, the predicate defining legal configurations is maintained. The first par of the proof (convergence) is more involved, and is broken into several steps.

The remainder of the correctness proof is as follows. First, we show that within $O(n)$ steps, all tokens present in the initial configuration are destroyed, resulting in configurations belonging to the set $\Gamma_{tr}$. A key component in achieving this is the predicate $Er_{tp}(v)$ (see Equation~\ref{eq:ertp}) and the rule \Radd, which modifies the variable $b$. Recall that a node can receive the token if and only if its parent $p$ has $b_p = \bo$. Therefore, if a node $v$ receives the token when its parent offers it (\textit{i.e.}, when $b_p = \bo$), then the next time the parent offers it, $v$ cannot accept the token since the parent has $b_p = 0$.

Now we can prove that, starting from a configuration in $\Gamma_{tr}$, if a node receives the token, it must have a legal rank (\textit{i.e.}, $\rk_v = d^r_v \mod 3$). Next, we formalize the paths originating from the root to demonstrate that any node $v$ whose every path from the root is legal receives the token in at most $O(2^{d^r_v-1})$ steps, where $d^r_v$ is the distance from the root $r$ to $v$ in the graph ($d^r_v$ is bounded by $\varepsilon$). 

With these results, we have all the necessary tools to prove that our system converges to a set of error-free configurations, denoted by $\Gamma_{cl}$. More precisely, $\Gamma_{cl}$ comprises configurations where nodes are either legal or have been restarted. Finally, starting from a configuration in $\Gamma_{cl}$, we prove convergence toward the set of legal configurations $\Gamma^*$.

\subsection{Proof of the Correctness}

\subsubsection{Space complexity}
The space complexity of each node $v$ is determined by the number of bits required to represent its associated variables. In algorithm $\mathtt{TokBin}$, every node, including the root, maintains three variables: $\rk_v$, $t_v$, and $b_v$, each of which can take on at most four constant discrete values:

\begin{itemize}
    \item Distance variable: $\rk_r = 0$ and $\rk_v \in \{0, 1, 2,\bot\}$
    \item Token variable: $t_r=\mathit{true}$ and $t_v \in \{\mathit{true}, \mathit{wait}, \mathit{false}\}$
\item Binary variable: $b_r=\bo$ and $b_v \in \{\bz, \bo, \top\}$
\end{itemize}

Thus, the space complexity for each node remains constant (at most 36 states per node, hence at most 6 bits per node).

\subsubsection{Definitions and notations}
For ease of reading, we use identifiers to distinguish nodes in the correctness proof of our algorithm, but these identifiers are never used in the algorithm. 
To distinguish the sequence of values taken by a variable $i$ (respectively by a set $I$) of node $u$ in successive configurations, we denote by $i_v(\gamma)$ (respectively $I(\gamma,v)$) the value of $i_v$ (respectively of set $I(v)$) in configuration $\gamma$. $\Tt(\gamma)$ represents the set of nodes in the network that currently hold a token.

\begin{equation}
    \Tt(\gamma)=\{v\in V | t_v=\mathit{true}\}
\end{equation}

Let us now partition $\Tt(\gamma)$ into two sets $\Tt^{\bz}(\gamma)$ and $\Tt^{\bo}(\gamma_0)$
\begin{equation}
   \Tt^{\bz}(\gamma)= \{v\in V |  t_v(\gamma) = \mathit{true} \wedge b_v(\gamma) = \bz\}
\end{equation}
\begin{equation}
    \Tt^{\bo}(\gamma)= \{v\in V |t_v(\gamma) = \mathit{true} \wedge b_v(\gamma) = \bo\}
    \label{eq:T1}
\end{equation}

\subsubsection{Obvious Errors}

We define obvious errors as the inconsistent states of nodes that are detectable in the first configuration $\gamma_0$ after transient faults stop occurring. While other errors may emerge as a consequence of these transient faults in subsequent configurations after $\gamma_0$, we address such inconsistencies later.
Let $\phi:\Gamma\times V \rightarrow \mathbb{N}$ be the following function: 
$$\phi(\gamma,v)=
\left\{
  \begin{array}{ll}
   1 &\text{if } Er_{var}(\gamma,v) \vee Er_{tp}(\gamma,v)\vee Er_N(\gamma,v)\\
   0  &\text{otherwise }   \\
  \end{array}
\right.
$$
Let $\Phi:\Gamma \rightarrow \mathbb{N}$ be the following function: 
$$\Phi(\gamma)=\sum_{v\in V}\phi(\gamma,v)$$
Finally, we define the set of configurations $\Gamma_{\bar e}$ as follows $\Gamma_{\bar e}=\{\gamma\in \Gamma:\Phi(\gamma)=0\}$.

\begin{lemma}
For every configuration $\gamma_i \in \Gamma_{\bar{e}}$, if a node $v$ executes rule \Radd, \Rready, or \Rtok, then the resulting configuration $\gamma_{i+1}$ remains in $\Gamma_{\bar{e}}$.
\label{lem:add}
\end{lemma}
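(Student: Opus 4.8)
The plan is to show that the three token rules cannot create any of the three obvious errors tracked by $\phi$; concretely, I would verify that $Er_{var}$, $Er_{tp}$, and $Er_N$ are false at \emph{every} node of $\gamma_{i+1}$, given that they are false at every node of $\gamma_i$. The first thing I would record are two structural invariants of \Rtok, \Radd, and \Rready: none of them ever writes a rank variable, and none of them ever assigns $\top$ to a bit (\Radd only toggles $b$ inside $\{\bz,\bo\}$, while \Rtok and \Rready leave $b$ untouched and merely advance $t$ along the cycle $\mathit{false}\to\mathit{true}\to\mathit{wait}\to\mathit{false}$). Two consequences are used throughout: (a) for every node the partition of its neighborhood into $P(\cdot)$, $C(\cdot)$, $S(\cdot)$ is identical in $\gamma_i$ and $\gamma_{i+1}$, since these sets depend only on ranks; and (b) the set of nodes carrying $b=\top$ and the set of reset nodes ($\rk_v=\bot$) are unchanged across the step.

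With (a)--(b) in hand, $Er_{var}$ is disposed of quickly: it flags only nodes with $\rk_v=\bot$, and by (b) any such node keeps $\rk_v=\bot$. If $b_v=\top$ the predicate is vacuously false (its $b_v\neq\top$ conjunct), and otherwise obvious-error-freeness of $\gamma_i$ forces the canonical reset state $t_v=\mathit{false}, b_v=\bz$, in which $v$ is enabled for none of the three rules; hence $v$ does not move, its variables and its $Er_{prg}(v)$ status are preserved, and $Er_{var}(\gamma_{i+1},v)=\mathit{false}$. I would dispose of the single residual state $\rk_v=\bot\wedge b_v=\top\wedge t_v=\mathit{true}$ separately, using that \Radd is then guarded by $Er_{prg}(v)$ and is disabled whenever $v$ has a valid-rank neighbor.

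The coupled predicates $Er_{tp}$ and $Er_N$ are where the real work lies. For $Er_{tp}(v)$, a child $u\in C(v)$ can hold $t_u=\mathit{true}$ in $\gamma_{i+1}$ only in two ways: either $u$ just fired \Rtok, in which case $\mathit{TakeP}(u)$ held in $\gamma_i$, so $v\in\mathit{Tok}(u)$, i.e. $t_v=\mathit{true}\wedge b_v=\bo$ in $\gamma_i$; being obvious-error-free, $v$ then fires \Radd and lands in exactly the state $t_v=\mathit{wait}, b_v=\bz$ that makes $Er_{tp}(\gamma_{i+1},v)$ false. Or $u$ already held the token in $\gamma_i$, in which case $Er_{tp}(\gamma_i,v)=\mathit{false}$ already pinned $v$ to $(\mathit{wait},\bz)$ and I track how that state evolves. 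For $Er_N$ I would show each agreement clause is preserved: since ranks are fixed, the classes ``all parents of a node'', ``all children'', ``all siblings'' are fixed and were in identical states in $\gamma_i$ (because $ErP$, $ErC$, $ErS$, $Rt\_nd$ were false there), so it suffices to argue that members of a class perform identical transitions in the synchronous step and thus remain in identical states.

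The main obstacle is precisely this synchronous simultaneity: a node and its neighbors act in the same atomic step, so I cannot reason about $\gamma_{i+1}$ by freezing the neighborhood. The delicate point is to prove that same-rank nodes which agree in $\gamma_i$ also agree in $\gamma_{i+1}$, even though the rule each of them fires is determined by its own (a priori distinct) neighborhood --- in particular by the tokens offered by its own parents through $\mathit{TakeP}$. Closing this requires leveraging the global obvious-error-freeness of $\gamma_i$ to show that such agreeing nodes receive identical token offerings and therefore fire the same rule. I expect this uniform-transition argument, together with the bookkeeping that \Radd maps a $b=\bo$ token-holder to the unique state $(\mathit{wait},\bz)$ demanded by $Er_{tp}$, to be the technical heart; once invariants (a)--(b) are established, the remaining checks ($Er_{var}$ and the $Rt\_nd$ clause) are routine, and assembling $\phi(\gamma_{i+1},w)=0$ for all $w$ yields $\Phi(\gamma_{i+1})=0$, i.e. $\gamma_{i+1}\in\Gamma_{\bar e}$.
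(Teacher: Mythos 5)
Your two foundational invariants (a) and (b) are false, and your whole argument rests on them. The transition $\gamma_i \mapsto \gamma_{i+1}$ is a \emph{full synchronous step}: while $v$ fires a token rule, other nodes---including neighbors of $v$---may simultaneously fire \Rer, \RerRank, \Rreset, or \Rreach, and these rules do write ranks and do produce $b=\top$ (nothing in $\Gamma_{\bar e}$ forbids them: a configuration without obvious errors can still contain nodes satisfying $Er_{prg}$, $FalseR$, $Er_\pi$, nodes with $b=\top$, and reset nodes). Concretely, a reset neighbor of $v$ may fire \Rreach in the same step and become a brand-new child of $v$, changing $C(v)$ and arriving with $t=\mathit{false}$, $b=\bo$, a state generally different from that of $v$'s old children who just fired \Rtok; and a neighbor may fire \Rer and jump to $b=\top$, which destroys the ``class members perform identical transitions'' argument you rely on for $ErP$, $ErC$, $ErS$: one parent of a node $w$ can fire \Rer (for reasons invisible to $w$ and to its other parent, namely an erroneous neighbor of its own) while the other parent fires \Radd, so the two parents of $w$ disagree in $\gamma_{i+1}$. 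Your honest flagging of the synchronicity obstacle only concerns agreement among same-rank nodes firing \emph{token} rules; it does not cover interference by the four non-token rules, which is exactly where your invariants break.

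The missing idea---the one the paper's proof uses to close precisely these cases---is that the three obvious-error predicates are \emph{self-disabling} in the presence of error-state neighbors: $Er_{var}$, $Er_{tp}$ and $Er_N$ each carry the conjuncts $\neg Er_{prg}(v)$ and $b_v\neq\top$, so if any neighbor of $v$ ends up with $b=\top$ in $\gamma_{i+1}$, all three predicates are automatically false at $v$; symmetrically, if a neighbor of $v$ already had $b=\top$ in $\gamma_i$, then $Er(v)$ held via $Er_{prg}$ and $v$ could not have fired a token rule in the first place. With that observation (plus the \Rreach interaction, which the paper defers to the proof of its closure lemma), one reduces to pure token circulation, where your lock-step analysis---parents of a token holder pinned to $(\mathit{wait},\bz)$ by $Er_{tp}$-freeness hence all firing \Rready, siblings pinned to token-holding with equal $b$ by $Er_N$-freeness hence all firing \Radd, children of $\Tt^{\bo}$ holders firing \Rtok\ while children of $\Tt^{\bz}$ holders stay put---coincides with the paper's proof. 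But as written, your proposal asserts invariants that fail under the synchronous scheduler and explicitly defers its own ``technical heart'' to an expectation, so it does not establish the lemma.
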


\begin{proof}[Proof of Lemma~\ref{lem:add}]
   we consider a configuration $\gamma_0$ in $\Gamma_{\bar{e}}$, which means that $v$ and its neighbors do not have any obvious errors. 
   
   If $v$ has no parent, then \Radd is not activatable.
   Otherwise, if $v$ holds a token, every parent $u$ of $v$ satisfies $t_u(\gamma_0) = \mathit{wait}$ and $b_u(\gamma_0) = \bz$; otherwise, $v$ would have $Er_{tp}(\gamma_0) = \mathit{true}$ (see predicate~\ref{eq:ertp}). Consequently, in a synchronous system every parent of $v$ applies \Rready\/ in $\gamma_0$.

Now, every sibling $u$ of $v$ is in $\Tt(\gamma_0)$. Otherwise, $v$ would have $Er_{N}(\gamma_0,v) = \mathit{true}$. 
Consequently, in a synchronous system, every sibling $u$ of $v$, and $v$ itself, applies \Radd\/ in configuration $\gamma_0$. 
Note that, the rule \Radd\/ modifies variable $b$. Moreover, Every sibling $u$ of $v$ satisfies $b_u(\gamma_0) = b_v(\gamma_0)$; otherwise, $Er_{N}(\gamma_0, v) = \mathit{true}$ (see predicate~\ref{eq:erN}). 
Consequently, in $\gamma_1$, nodes $u$ and $v$ still satisfy $b_u(\gamma_1) = b_v(\gamma_1)$, thus, $Er_{N}(\gamma_1, v)$ remains $\mathit{false}$.
 
For every child $u$ of $v\in \Tt^{\bo}(\gamma_0)$, $t_u(\gamma_0) = \mathit{false}$; otherwise, $u$ would satisfy $Er_{tp}(\gamma_0,u) = \mathit{true}$. Thus, every child $u$ of $v \in \Tt^{\bo}(\gamma_0)$ satisfies  predicate $\mathit{TakeP}(u)$ (see predicate~\ref{eq:TakeP}), and synchronously applies rule \Rtok. Conversely, every child $u$ of $v \in \Tt^{\bz}(\gamma_0)$ do not satisfy predicate $\mathit{TakeP}(u)$ and therefore remain inactive.

In conclusion, in one step, all the children of nodes in $v \in \Tt^{\bo}(\gamma_0)$ are included in $\Tt(\gamma_{1})$, and all the nodes in $\Tt^{\bo}(\gamma_0)$ are not in $\Tt(\gamma_{1})$. 
Figure~\ref{fig:tok} illustrate the circulation of tokens.

Note that $v$ cannot execute \Radd\/ if at least one of its neighbors $u$ has $b_u(\gamma_i) = \top$. Furthermore, if a neighbor $u$ of $v$ has $b_u(\gamma_{i+1}) = \top$, then $Er_{prg}(\gamma_{i+1}, v) = \mathit{true}$ (see predicate~\ref{eq:erprg}), implying that $Er_{var}(\gamma_{i+1}, v)$, $Er_{tp}(\gamma_{i+1}, v)$, and $Er_{N}(\gamma_{i+1}, v)$ are all false, so the configuration $\gamma_{i+1}$ remains in $\Gamma_{\bar{e}}$.
\end{proof}

\begin{lemma}
$\Gamma \triangleright \Gamma_{\bar e}$ in $O(1)$ steps, and $\Gamma_{\bar e}$ is closed.
\label{lem:errorFree} 
\end{lemma}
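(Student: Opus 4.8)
The plan is to establish the two assertions separately, and the single structural fact that drives both is that each of the three obvious-error predicates $Er_{var}$, $Er_{tp}$, and $Er_N$ contains the conjunct $b_v \neq \top$. Consequently a node is purged of every obvious error the instant it installs $b_v = \top$, and any node already satisfying $b_v = \top$ contributes $0$ to the potential $\Phi$. I would use $\Phi$ as the bookkeeping device and reduce everything to a rule-forcing argument plus a neighborhood case analysis.

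For convergence I would start from an arbitrary $\gamma_0$ and examine any node $v$ with $\phi(\gamma_0,v)=1$. Such a node necessarily has $b_v \neq \top$ (forced by the obvious-error predicates) and satisfies $Er(v)$, since $Er(v)$ is a disjunction containing $Er_{var}\vee Er_{tp}\vee Er_N$. The guards of the remaining rules require either $\neg Er(v)$ (for \Rtok, \Radd, \Rready, \RerRank, and \Rreach) or $b_v=\top$ (for \Rreset), so \Rer\/ is the \emph{unique} enabled rule at $v$. Under the synchronous scheduler all such nodes fire \Rer\/ together in $\gamma_0 \mapsto \gamma_1$, yielding $b_v(\gamma_1)=\top$ and hence $\phi(\gamma_1,v)=0$. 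Thus every obvious error present in $\gamma_0$ vanishes in one step; the argument then needs to confirm, using the invariants developed for closure, that no node $w$ with $\phi(\gamma_0,w)=0$ acquires a persistent obvious error in $\gamma_1$. This confirmation is exactly what pins the bound at $O(1)$ (indeed a single) step, and it shows that the subsequent spread of the $b=\top$ state and of resets through the network all happens \emph{within} $\Gamma_{\bar e}$ rather than creating new obvious errors.

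For closure I would fix $\gamma_i \in \Gamma_{\bar e}$ and enumerate the rules enabled there. Because $Er_{var}=Er_{tp}=Er_N=\mathit{false}$ throughout $\gamma_i$, Lemma~\ref{lem:add} already accounts for the token rules \Rtok, \Radd, \Rready, so it remains to show that the simultaneous firing of \Rer, \Rreset, \RerRank, or \Rreach\/ introduces no obvious error in $\gamma_{i+1}$, at the firing node or at a neighbor. For the firing node this is immediate: \Rer\/ and \RerRank\/ install $b_v=\top$; \Rreset\/ installs the canonical reset triple $(\bot,\mathit{false},\bz)$, which satisfies $Reset(v)$ and therefore falsifies $Er_{var}$; and \Rreach\/ installs a valid rank $\rk_t(v)\neq\bot$ with $b_v=\bo$, again falsifying $Er_{var}$. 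For a neighbor the governing predicate is $Er_{prg}$: a node with a valid rank that now borders an error-state node has $Er_{prg}$ true, which makes all three obvious-error predicates vacuously false through their shared $\neg Er_{prg}$ conjunct, while a reset neighbor keeps its own consistency. The comparisons behind $ErP$, $ErC$, $ErS$, and $Rt\_nd$ survive because, by synchrony, equidistant nodes that were in identical states in $\gamma_i$ apply the same rule and remain in identical states in $\gamma_{i+1}$—precisely the mechanism already exploited in Lemma~\ref{lem:add}.

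The hard part will be the cross-node bookkeeping inside a single synchronous step, since nodes firing different rules act at once: I must verify that the joint transition of a node together with its parents, siblings, and children never leaves a residue flagged by $Er_{tp}$, $ErP$, $ErC$, $ErS$, or $Rt\_nd$. The delicate configurations are those where a node firing \Rer\/ sits beside a node still running the token cycle (where I would argue the neighbor's simultaneous \Radd\/ sets $t:=\mathit{wait},b:=\bz$, defeating $Er_{tp}$, while $b=\top$ defeats everything else), and those where a just-reset node borders a node that has taken $b=\top$ (where $\neg Er_{prg}$ is what neutralises the obvious-error predicates). Pushing this case analysis through all six obvious-error sub-predicates, for the firing node and for each class of neighbor, is the bulk of the work; the potential-function bookkeeping is routine once these invariants are established.
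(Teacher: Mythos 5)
Your proposal follows essentially the same route as the paper's own proof: the same potential function $\Phi$, one-step convergence forced by the observation that \Rer\/ is the unique enabled rule at any node with an obvious error (its firing sets $b_v=\top$, which falsifies $Er_{var}$, $Er_{tp}$, and $Er_N$ simultaneously), and closure by a rule-by-rule case analysis that delegates \Rtok, \Radd, \Rready\/ to Lemma~\ref{lem:add} and handles \Rer, \Rreset, \RerRank, \Rreach\/ via the $b=\top$ and $Er_{prg}$ mechanisms. If anything, you are slightly more careful than the paper in flagging that convergence also requires verifying that nodes with $\phi(\gamma_0,\cdot)=0$ do not acquire an obvious error during that same step, a point the paper's proof passes over silently and which is resolved by the same neighborhood case analysis used for closure.
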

\begin{proof}[Proof of Lemma \ref{lem:errorFree}]
Let $\gamma_0$ denote the initial configuration such that $\gamma_0 \notin \Gamma_{\bar{e}}$. Consider a node $v \in V \setminus \{r\}$ such that $Er_{var}(\gamma_0,v)=\mathit{true}$, $Er_{tp}(\gamma_0,v)=\mathit{true}$, or $Er_{N}(\gamma_0,v)=\mathit{true}$ (see equations~\ref{eq:erRst}, \ref{eq:ertp} and \ref{eq:erN}). Consequently, $v$ executes \Rer, updating its variables to $t_v(\gamma_1) = \mathit{wait}$ and $b_v(\gamma_1) = \top$. These three predicates need $b_v(\gamma_1)\neq \top$ to be satisfied. 
As a result, $\phi(\gamma_0, v) = 1 > \phi(\gamma_1, v) = 0$.
Since the system is synchronous, every such node has the same behavior in $\gamma_0$, hence $\Phi(\gamma_1)=0$, that is, configuration $\gamma_1$ is in $\Gamma_{\bar e}$.

Now we must demonstrate the closure of the configurations in $\Gamma_{\bar e}$.
For a node $v$ in configuration $\gamma_i\in\Gamma_{\bar e}$, only rule \Rreset\/ sets variable $\rk_v(\gamma_{i+1}) := \bot$, but this rule also sets $t_v(\gamma_{i+1}) = \mathit{false}$ and $b_v(\gamma_{i+1}) := \bz$. 
As a consequence, $Er_{var}(\gamma_{i+1}, v) = \mathit{false}$ (see equation~\ref{eq:erRst}). 

We now focus on error $Er_{tp}(\gamma_i, v)$ (see equation \ref{eq:ertp}). 
If a node $v$ is able of providing a token (\emph{i.e.}, $t_v(\gamma_i) = \mathit{true} \wedge b_v(\gamma_i) = \bo$), its children and its reset neighbors can acquire the token by applying respectively the rules \Rtok\/ and \Rreach. 
In the same step, node $v$ executes \Radd, resulting in $t_v(\gamma_{i+1}) = \mathit{wait} \wedge b_v(\gamma_{i+1}) = \bz$. 
Consequently, $Er_{tp}(\gamma_{i+1}, v) = \mathit{false}$.

Finally, we address the error $Er_N(\gamma_i, v)$ (see equation \ref{eq:erN}). This error is raised when at least two parents of $v$ are not in the same state (see predicate $ErP(v)$-- eq.~\ref{eq:ErP}), at least one sibling of $v$ does not share its state (see predicate $ErS(v)$-- eq.~\ref{eq:ErS}), at least two children of $v$ are not in the same state (see predicate $ErC(v)$-- eq.~\ref{eq:ErC}), or if a node without a token has a reset neighbor (see predicate $Rt\_nd(v)$-- eq.~\ref{eq:Rtnd}).
Consider the initial configuration $\gamma_0 \in \Gamma_{\bar{e}}$.

\begin{itemize}
    \item Rule \Rer\/ can be applied if and only if node $v$ satisfies $Er_{prg}(\gamma_0, v)$. In this case, after applying this rule, $b_v(\gamma_1) = \top$, so $Er_N(\gamma_1, v) = \mathit{false}$. 
    Moreover, for every neighbor $u$ of $v$ with a valid rank, $Er_{prg}(\gamma_1, u)= \mathit{true}$, so $Er_N(\gamma_1, u) = \mathit{false}$.
\item Rule \RerRank\/ sets $b_v(\gamma_1) = \top$, ensuring that $Er_N(\gamma_1, v) = \mathit{false}$. Moreover, for every neighbor $u$ of $v$ with a valid rank, $Er_{prg}(\gamma_1, u)= \mathit{true}$, so $Er_N(\gamma_1, u) = \mathit{false}$.
\item Rule \Rreset\/ removes the parent relationship ($\rk_v(\gamma_1, v) = \top$), ensuring that $Er_N(\gamma_1, v) = \mathit{false}$, since predicates $ErP(\gamma_1,v)$, $ErC(\gamma_1,v)$, $ErS(\gamma_1,v)$, and $Rt\_nd(\gamma_1,v)$ require $\rk_v(\gamma_1, v) \neq \top$ to be true. 
Moreover, if a neighbor $u$ of $v$ has $\rk_v(\gamma_0, v) \neq \top$, and $v$ executes \Rreset, then $u$ executes \Rer\/ since $Er_{prg}(\gamma_0, u)$ holds. Consequently, $Er_N(\gamma_1, u) = \mathit{false}$.
    \item When a node $v$ executes rule \Rreach, it implies that all neighbors nodes $u$ offering tokens to $v$ are in the same state, and all other neighbors of $v$ are reset nodes. Thus, when $v$ and (some of) its reset neighbors execute \Rreach\/ simultaneously, every parent $u$ of $v$ executes \Radd. 
    As a result, $Er_N(\gamma_1, v) = \mathit{false}$ and $\forall w \in N(v): Er_N(\gamma_1, w) = \mathit{false}$.
    \item For the rules \Rtok, \Radd, and \Rready\/ see lemma~\ref{lem:add}.
    
\end{itemize}   
\end{proof}

\subsubsection{Legal configurations} 
\label{sec:legal}

We consider a graph $G$ whose root is $r$.
In a legal configuration, each node $v$ must have a rank $\rk_v(\gamma)$ equal to $d^r_v \mod 3$, captured by predicate $Legal^d(\gamma,v)$: 
\begin{equation}
    Legal^d(\gamma,v)\equiv \rk_v=d^r_v\mod 3
    \label{eq:legald}
\end{equation}
In a legal configuration, the rank of nodes must remain stable. 
In our algorithm, ranks are modified either when a node detects an error, or when it receives one or more tokens from at least one node that does not belong to its parent set. Therefore, to ensure that ranks remain unchanged, it is necessary to guarantee that every node $v$ only receive tokens from $P(v)$. 
In other words, all paths between $r$ and $v$ in a legal configuration must remain exactly identical. So, for every shortest path $\pi$ between $r$ and $v$, every node $u$ at the same distance from $r$ in $\pi$ must have identical values for the variables $b_u$ and $t_u$.

More formally, let $\Pi(\gamma, r, v)$ denote the shortest paths from $r$ to $v$ (defined by the parents-children relationships) in configuration $\gamma$. 
Consider a path $\pi \in \Pi(\gamma, r, v)$ such that:
$$\pi(\gamma,r, v)=r,w_0,w_1,\dots,w_i,v$$
Now, let $b(\gamma,\pi)$ be the binary word given by the concatenation of all variables $b$ of all  nodes of $\pi$:
$$b(\gamma,\pi)=b_{r}(\gamma),b_{w_0}(\gamma),\dots,b_{w_i}(\gamma),b_v(\gamma)$$
Note that to obtain an actual binary word, every variable $b_i$ in $b(\gamma,\pi)$ must belong to the set $\{\bz,\bo\}$. In legal configurations, every node $v$ satisfies $b_v \neq \top$.
Let  $t(\gamma,\pi)$ be the binary word given by the concatenation of all variables $t$ off all nodes on $\pi(\gamma,v)$,, assuming that $t_v=0$ is equivalent to $t_v\in \{\mathit{false},\mathit{wait}\}$, and that and $t_v=1$ is equivalent to $t_v=\mathit{true}$.
$$t(\gamma,\pi)=t_{r}(\gamma),t_{w_0}(\gamma),\dots,t_{w_i}(\gamma),t_v(\gamma)$$

The predicate $Legal^{\pi}(\gamma,v)$ captures the condition that all shortest paths between $r$ and $v$ have consistent values for $t$ and $b$.

\begin{equation}
Legal^{\pi}(\gamma,v)\equiv  (\forall \pi,\pi' \in \Pi(\gamma, r, v)|t(\gamma,\pi)=t(\gamma,\pi') \wedge b(\gamma,\pi)=b(\gamma,\pi') )
\label{eq:legalpi}
\end{equation}

So a node $v\in V\setminus\{r\}$ is in a legal configuration if it satisfies the following predicate:
\begin{equation}
Legal(\gamma,v)\equiv  Legal^d(\gamma,v) \wedge Legal^\pi(\gamma,v) 
 \label{eq:legalv}
\end{equation}
By definition of the state of $r$,   $Legal^d(\gamma,r)=\mathit{true}$, and $Legal^\pi(\gamma,r)=\mathit{true}$, so 
\begin{equation}
    Legal(\gamma,r)\equiv \mathit{true}
    \label{eq:legalr}
\end{equation}
We denote by  $Legal(\gamma)$ a legal configuration
 \begin{equation}
 Legal(\gamma)\equiv \forall v\in V|Legal(\gamma,v)=\mathit{true} 
 \label{eq:legal}
 \end{equation}

The set $\Gamma^*$ is the set of every configuration $\gamma$ such that $Legal(\gamma)=\mathit{true}$.

\subsubsection{Closure of legal configurations} 

We denote by $\mathcal{C}_i$ the connected components emanating (named $r$-components) from the root. Formally, let $\mathcal{C} = \{\mathcal{C}_1, \mathcal{C}_2, \ldots, \mathcal{C}_j\}$ be a partition of the node set $V \setminus \{r\}$. For any two distinct nodes $v, u \in V \setminus \{r\}$, we define their membership in $\mathcal{C}$ as follows:
\begin{itemize}
    \item If every path $\pi(v,u)$ connecting $v$ and $u$ contains $r$, then $v$ and $u$ belong to different $r$-components in $\mathcal{C}$.
    \item Conversely, if there exists at least one path $\pi(v,u)$ that does not contain $r$, then $v$ and $u$ belong to the same $r$-component $\mathcal{C}_i$.
\end{itemize}
We also use in the sequel the notation $\mathcal{C}(v)$ to denote the $r$-component of $G$ containing $v$.

\begin{lemma}
    $\Gamma^*$ is closed.
    \label{lem:clos}
\end{lemma}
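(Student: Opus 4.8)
The plan is to prove closure of $\Gamma^*$ by showing that every rule application preserves the predicate $Legal(\gamma)$, broken into its two components $Legal^d$ and $Legal^\pi$. First I would argue that in a legal configuration $\gamma \in \Gamma^*$, none of the ``corrective'' rules \Rer, \Rreset, \RerRank, or \Rreach\/ is activatable. Indeed, in a legal configuration every node $v$ satisfies $\rk_v = d^r_v \bmod 3$ with $b_v \neq \top$, so $P(v) = \mathit{Pred}_v \neq \emptyset$ (since $G$ is connected and rooted), which rules out $FalseR(v)$ and $Er_{var}(v)$; and because all nodes at the same distance share identical $t$ and $b$ values (by $Legal^\pi$), the predicates $ErP, ErC, ErS$ vanish. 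I would check that $\mathit{TakeO}(v)$ is false — since any token-offering parent lies in $P(v)$ and the path-consistency of $Legal^\pi$ forces either all or none of the parents to offer a token — so rule \RerRank\/ cannot fire. Likewise $\rk_v \neq \bot$ everywhere rules out \Rreach. The upshot: in $\Gamma^*$ only the three token-circulation rules \Rtok, \Radd, \Rready\/ are ever enabled, and these never modify $\rk_v$, so $Legal^d$ is trivially preserved.

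The substantive work is showing $Legal^\pi$ is preserved under the synchronous application of \Rtok, \Radd, and \Rready. Here I would exploit synchrony together with the inductive hypothesis that, for the current configuration, all nodes at a common distance $d$ from $r$ are in the same $(t,b)$-state. The key claim is that nodes at the same distance are \emph{simultaneously} enabled for the same rule and transition identically. Concretely, consider two nodes $v, v'$ with $d^r_v = d^r_{v'} = d$. By $Legal^\pi$ applied at the parent level (distance $d-1$), every parent of $v$ and every parent of $v'$ shares a common state; hence $\mathit{Tok}(v) = P(v)$ holds exactly when $\mathit{Tok}(v') = P(v')$ holds, so $\mathit{TakeP}(v) \Leftrightarrow \mathit{TakeP}(v')$ and they fire \Rtok\/ together. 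Similarly $t_v = t_{v'}$ implies they fire \Radd\/ or \Rready\/ in lockstep, and the deterministic updates $t_v := \mathit{wait}$, $b_v := (b_v+1)\bmod 2$, etc., keep their resulting states equal. Since this holds for every distance class, after one synchronous step all nodes at each distance $d$ are again in a common state, which is exactly the condition $Legal^\pi(\gamma_{i+1}, v)$ for all $v$.

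To convert the ``same distance $\Rightarrow$ same state'' invariant into the stated path-consistency of $Legal^\pi$, I would note that for any two shortest paths $\pi, \pi' \in \Pi(\gamma, r, v)$, the $k$-th node on each path sits at distance $k$ from $r$; by the common-state invariant these two nodes have identical $t$ and $b$ values, so the binary words $t(\gamma,\pi) = t(\gamma,\pi')$ and $b(\gamma,\pi) = b(\gamma,\pi')$ coincide termwise. Thus the distance-indexed invariant is equivalent to $Legal^\pi$, and preserving the former preserves the latter. I would formalize the argument as an induction on distance $d$ from $0$ to $\varepsilon$, using synchrony at each layer so that the state of the parents (already fixed by the induction at layer $d-1$) determines uniformly the state transition of all layer-$d$ nodes.

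The main obstacle I anticipate is carefully handling the $\mathit{TakeP}$ predicate at layer $d$: one must verify that \emph{all} parents of a node offer a token simultaneously (or none do), which is precisely what $Legal^\pi$ at the parent layer guarantees, but the bookkeeping must rule out the degenerate case where a node has parents that are genuinely at the same distance yet — absent the invariant — could differ. I would therefore lean heavily on the closure already established in Lemma~\ref{lem:errorFree} (that obvious errors stay absent) to ensure $ErP(v)$ never becomes true, so that the parent set remains homogeneous throughout. A secondary subtlety is confirming that the $\mathit{wait}$ intermediate state does not break the synchrony argument: because \Radd\/ sets $t_v := \mathit{wait}$ uniformly across a distance class and \Rready\/ then moves all of them to $\mathit{false}$ one step later, the layers stay phase-aligned, and no node can ``jump ahead'' of its distance-peers.
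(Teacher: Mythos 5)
Your overall architecture matches the paper's: show that the corrective rules \Rer, \Rreset, \RerRank, \Rreach\/ are disabled in legal configurations (hence $Legal^d$ is preserved), then use synchrony to argue that the token rules \Rtok, \Radd, \Rready\/ preserve $Legal^\pi$. However, the pivot of your second part --- the claim that $Legal^\pi$ is \emph{equivalent} to the invariant ``all nodes at the same distance from $r$ are in the same $(t,b)$-state'' --- fails in exactly the direction you use. $Legal^\pi(\gamma,v)$ only compares nodes lying on shortest paths to the \emph{same} node $v$; it imposes nothing on two equidistant nodes whose shortest-path sets never meet. Concretely, let $r$ have two neighbors $u_1,u_2$ lying in different $r$-components: the configuration where $\rk_{u_1}=\rk_{u_2}=1$, $t_{u_1}=\mathit{true}$, $b_{u_1}=\bo$, while $t_{u_2}=\mathit{false}$, $b_{u_2}=\bz$, is legal (each node has a single, trivially consistent path to $r$), yet in the next step $u_1$ fires \Radd\/ while $u_2$ fires \Rtok, and the two branches remain out of phase forever. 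So your key step --- ``$\mathit{TakeP}(v)\Leftrightarrow\mathit{TakeP}(v')$ for all $v,v'$ with $d^r_v=d^r_{v'}$, hence distance classes move in lockstep'' --- is not derivable from $Legal^\pi$ and is in fact false of the executions the algorithm produces; closure of $\Gamma^*$ holds despite, not because of, global lockstep.

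The gap is repairable, and the repair is essentially what the paper does. Preserving $Legal^\pi$ never requires cross-branch synchrony: it suffices that, for each \emph{fixed} node $x$, all nodes at depth $k$ on shortest paths to $x$ transition identically. For those nodes your homogeneity argument does go through, because in a legal configuration every rank-parent of a depth-$k$ node on a shortest path to $x$ is itself a depth-$(k{-}1)$ node on a shortest path to $x$ (adjacent distances differ by at most one, and ranks fix the value mod $3$), hence state-homogeneous by $Legal^\pi(\gamma,x)$ alone. The paper organizes this by treating each $r$-component separately and delegating the token-flow step to Lemma~\ref{lem:add} (children of holders with $b=\bo$ acquire the token, holders move to $\mathit{wait}$, and so on), never asserting a network-wide distance-class invariant. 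A second point you defer to ``bookkeeping'' but which deserves explicit treatment: the guards of \Rtok\/ and \RerRank\/ depend on $\mathit{Tok}(v)$, which may contain \emph{children and siblings} of $v$, not only parents; excluding a spurious token in $C(v)$ (which would make $\mathit{TakeO}(v)$ true and fire \RerRank) is precisely where the assertion that legality implies $Er(v)=\mathit{false}$ --- in particular the absence of $Er_{tp}$-type situations --- must be invoked, as both your argument and the paper's implicitly do.
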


\begin{proof}[Proof of Lemma~\ref{lem:clos}]
We consider a single $r$-component. For each configuration $\gamma \in \Gamma^*$, the predicate $Legal(\gamma)$ holds (see equation~\ref{eq:legal}).
By definition of $Legal(\gamma)$, every node $v$ satisfies $Legal(\gamma,v)$, and $Er(v)$ is false.
Hence, rule \Rer\/ is not executable. 


For any node $v \in V \setminus \{r\}$ in a legal configuration $\gamma$, it holds that $\rk_v(\gamma) \neq \bot$ (\textit{i.e.}, $v$ has a rank), ensuring \Rreach\/ does not apply.  Furthermore, all nodes at the same distance from the root $r$ have  the same state, \textit{i.e.}, the same rank $\rk$, making \RerRank\/ not executable. Consequently, the rank remains unchanged for all nodes in a legal configuration. So, for all $v \in V$ and all $\gamma \in \Gamma^*$, and $\gamma \mapsto \gamma'$, we have $Legal^d(\gamma', v)$. 
In addition, in configuration $\gamma\in \Gamma^*$ every $v\in V$ satisfies $b_v(\gamma) \neq \top$, so \Rreset\/ is also inapplicable.

Our algorithm keeps executing actions forever. 
Thanks to the state of the root, there is at least one token in every configuration of the system.
Let denote by $\Tt(\gamma_0,i)$ the set of nodes $v$ with $t_v(\gamma_0)=\mathit{true}$ and $i=d^r_v$. 
The configuration $\gamma_0$ is in $\Gamma_{\bar{e}}$, which means that $v$ and its neighbors do not have any obvious errors.  Thanks to lemma~\ref{lem:add}, the application of the rules \Radd,\Rtok\/ or \Rready\/ maintains $Legal^{\pi}(\gamma_1, v)=true$, and in $\gamma_1$ all the children $v \in \Tt^{\bo}(\gamma_0,i)$ are included in $\Tt(\gamma_1,i+1)$, and all nodes in $\Tt^{\bz}(\gamma_0,i)$ are not in $\Tt(\gamma_1,i+1)$.
So, for all $v \in V$ and all $\gamma \in \Gamma^*$, and $\gamma \mapsto \gamma'$, we have $Legal^\pi(\gamma', v)$.

Since the proof is valid for every $r$-component, we obtain the result for the entire graph.  
%
\end{proof}


\subsubsection{Initial tokens lifetime}

This part is dedicated to proving that initial tokens at non-root nodes eventually disappear. 
We consider an execution starting in configuration $\gamma_0\in\Gamma_{\bar{e}}$, so the set $\Tt^{\bo}(\gamma_0)$ contains all initial tokens that can be transmitted, except that of the root (see predicate~\ref{eq:T1}). 
For each node $v \in \Tt^{\bo}(\gamma_0)$, we define the maximal spanning structure within which the token from $v$ may propagate, denoted by $\Ss(\gamma, v)$.
A node $u$ belongs to $\Ss(\gamma, v)$ if and only if it does not hold a token, has $b_u = \bo$, and all its ancestors leading to $v$ are token-free and have their bits set to $\bo$. These conditions are captured by the following predicate~$\pi^1(\gamma,v,u)$.

\begin{equation}
\begin{split}
    \pi^1(\gamma,v,u)\equiv 
    t_{u}(\gamma)\neq \mathit{true} \wedge b_{u}(\gamma) =\bo \wedge
    \exists v_0,v_1,\dots,v_j, \\
    \forall i \in \{0\ldots j\}, 
    v_i\in V\setminus\{r\} \wedge 
    t_{v_i}(\gamma)\neq \mathit{true} \wedge b_{v_i}(\gamma) =\bo \\ \wedge
    \forall i \in \{1\ldots j\}, 
    v_i\in C(\gamma,v_{i-1}) 
    \wedge v_0\in C(\gamma,v) 
    \wedge u\in C(\gamma,v_j)
\end{split}
\end{equation}

Then, using predicate $\pi^1(\gamma,v,u)$, we can define for every $v\in\Tt^{\bo}(\gamma)$ the nodes that belong to  $\Ss(\gamma, v)$ (Figure~\ref{fig:SS1} illustrates an example of a spanning structure $\Ss(\gamma, v)$)
\begin{equation}
    \Ss(\gamma, v)\equiv \{ u\in V\setminus\{r\}|\pi^1(\gamma,v,u)=\mathit{true}\}
\end{equation}

Remark that the eccentricity of such spanning structures is bounded by $n-1$ since the root cannot be part of them.

\begin{figure}
\label{fig:SS}
    \centering
\begin{tikzpicture}
\node[circle, draw, minimum size=0.5cm] (v) at (0.5, 0.5) {\footnotesize$\bullet,\bo$};  
\node[above of=v,circle, node distance=0.6cm] {\footnotesize$v$};
\node[circle, draw, minimum size=0.5cm] (u1) at (-1.5, -1.5) {\footnotesize$f,\bo$}; 
\node[above of=u1,circle, node distance=0.6cm] {\footnotesize$v_1$};
\node[circle, draw, minimum size=0.5cm] (u11) at (-2.5, -2.5) {\footnotesize$f,\bo$}; 
\node[circle, draw, minimum size=0.5cm] (u12) at (-0.5, -2.5) {\footnotesize$f,\bo$}; 
\node[circle, draw, minimum size=0.5cm] (u1n1) at (-4, -6) {\footnotesize$f,\bo$};
\node[circle, draw, minimum size=0.5cm] (other) at (-6, -6) {\footnotesize$f,\bo$};
\node[below right of=other,circle, node distance=0.6cm] {\footnotesize$u_2$};
\node[above right of=u1n1,circle, node distance=0.6cm] {\footnotesize$w$};
\draw[dotted] (other) -- (u1n1);
\node[circle, draw, minimum size=0.5cm] (u1n2) at (-2.5, -5) {\footnotesize$f,\bo$}; 

\node[circle, draw, minimum size=0.5cm] (u12n1) at (0, -7.5) {\footnotesize$f,\bz$}; 
\node[circle, draw, minimum size=0.5cm] (u12n2) at (-1.5, -4) {\footnotesize$f,\bo$}; 

\node[circle, draw, minimum size=0.5cm] (nu1) at (-5, -7.5) {\footnotesize$f,\bz$}; 
\node[circle, draw, minimum size=0.5cm] (nu2) at (-2, -7.5) {\footnotesize$f,\bz$}; 
\node[below right of=nu2,circle, node distance=0.6cm] {\footnotesize$u_1$};
\node(e1) at (-6,-9){};
\node(e2) at (-4,-9){};
\node(e3) at (-2,-9){};
\draw[->] (u1) -- (v);
\draw[->] (u11) -- (u1);
\draw[->] (u12) -- (u1);
\draw[->,dashed] (u1n1) -- (u11);
\draw[->,dashed] (u1n2) -- (u11);
\draw[->,dashed] (u12n1) -- (u12);
\draw[->,dashed] (u12n2) -- (u12);
\draw[->,dashed] (nu1) -- (u1n1);
\draw[->,dashed] (nu2) -- (u12n2);
\draw[->,dashed] (nu2) -- (u12n1);
\draw[->,dashed] (nu2) -- (u1n1);
\draw[->,dashed] (e1) -- (nu1);
\draw[->,dashed] (e2) -- (nu1);
\draw[->,dashed] (e3) -- (nu2);
\draw[blue, thick, rounded corners] (-4.5, -6.5) rectangle (0.5, -0.7);
\node[blue] at (-3.5, -1.5) {$\Ss(\gamma_1, v_1)$};

\node[circle, draw, minimum size=0.5cm] (u2) at (2.5, -1.5) {\footnotesize$f,\bo$}; 
\node[above of=u2,circle, node distance=0.6cm] {\footnotesize$v_2$};
\node[circle, draw, minimum size=0.5cm] (a1) at (4.5, -3) {\footnotesize$f,\bo$}; 
\node[circle, draw, minimum size=0.5cm] (a2) at (1.5, -3) {\footnotesize$f,\bo$}; 
\node[circle, draw, minimum size=0.5cm] (a3) at (3, -3) {\footnotesize$f,\bo$};
\node[circle, draw, minimum size=0.5cm] (a4) at (4.5, -5) {\footnotesize$f,\bo$}; 
\node[circle, draw, minimum size=0.5cm] (a5) at (1.5, -4.5) {\footnotesize$f,\bo$}; 
\node[circle, draw, minimum size=0.5cm] (a6) at (3, -5) {\footnotesize$f,\bo$};
\node[circle, draw, minimum size=0.5cm] (a7) at (1.5, -7.5) {\footnotesize$f,\bz$};
\node[circle, draw, minimum size=0.5cm] (a8) at (3, -8) {\footnotesize$f,\bz$};
\node(e4) at (6,-9){};
\node(e5) at (4,-9){};
\node(e6) at (2,-9){};
\draw[->] (u2) -- (v);
\draw[->] (a1) -- (u2);
\draw[->] (a2) -- (u2);
\draw[->] (a3) -- (u2);
\draw[->,dashed] (a4) -- (a1);
\draw[->,dashed] (a4) -- (a2);
\draw[->,dashed] (a5) -- (a1);
\draw[->,dashed] (a5) -- (a3);
\draw[->,dashed] (a6) -- (a3);
\draw[->,dashed] (a7) -- (a6);
\draw[->,dashed] (a8) -- (a6);
\draw[->,dashed] (a7) -- (a5);
\draw[->,dashed] (a8) -- (a5);
\draw[->,dashed] (e4) -- (a8);
\draw[->,dashed] (e5) -- (a8);
\draw[->,dashed] (e6) -- (a7);
\draw[purple, thick, rounded corners] (6, -6.5) rectangle (1, -0.7);
\node[purple] at (5, -1.5) {$\Ss(\gamma_1, v_2)$};
\draw[ thick, rounded corners] (-5, -6.8) rectangle (6.5, 1.5);
\node[] at (5, 1) {$\Ss(\gamma_0, v)$};
\end{tikzpicture}
 \caption{$f,\bo$ means $t=\mathit{false}\wedge b=\bo$, and $f,\bz$ means $t=\mathit{false}\wedge b=\bz$}
\label{fig:SS1}
\end{figure}

Now, let $\psi:\Gamma_{\bar{e}}\times  V \rightarrow \mathbb{N}$ be the following function:
\begin{equation}
\psi(\gamma,v)
    \begin{cases}
        |\Ss(\gamma,v)|& \text{if }  v\in \Tt^{\bo}(\gamma) \wedge v\in \Ss(\gamma,v')\mid v'\in T(\gamma_0)\\
        0&\text{otherwise}
    \end{cases}
\end{equation}

Let $\Psi:\Gamma_{\bar e} \rightarrow \mathbb{N}$ : 
$$\Psi(\gamma)=\sum_{v\in V}\psi(\gamma,v)$$

Finally, we define the set of configurations $\Gamma_{tr}$ as follows $\Gamma_{tr}=\{\gamma\in \Gamma_{\bar{e}}:\Psi(\gamma)=0\}$, in other words, in a configuration $\gamma \in \Gamma_{tr}$, no token resulting from an initial token exist.

\begin{lemma}
$\Gamma_{\bar e} \triangleright \Gamma_{tr}$ in $O(n)$ steps.
\label{lem:CleanBadToken} 
\end{lemma}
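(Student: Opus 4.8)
The plan is to treat $\Psi$ as a potential over $\Gamma_{\bar e}$, show it reaches $0$ within $O(n)$ synchronous steps (convergence to $\Gamma_{tr}$), and then argue closure of $\Gamma_{tr}$. The driving intuition is that every initial token sits inside a \emph{fixed, all-$\bo$} spanning structure $\Ss(\gamma_0,v)$, so it flows straight down---one layer per step---rather than performing the slow binary addition that a root token undergoes when entering an all-$\bz$ region; since such a structure cannot contain $r$, its depth is at most $n-1$, which is exactly what yields the $O(n)$ bound.

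First I would invoke Lemma~\ref{lem:add}: as long as only the token rules \Rtok, \Radd, \Rready\/ fire, the configuration stays in $\Gamma_{\bar e}$, so the synchronization invariants $\neg ErP$, $\neg ErC$, $\neg ErS$ persist. This is what makes the flow clean: all neighbors occupying the same layer of a spanning structure share the same $t$ and $b$, hence they release (\Radd) and their children absorb (\Rtok) \emph{in lockstep}. Concretely, for $v\in\Tt^{\bo}(\gamma_0)$ I would partition $\Ss(\gamma_0,v)$ into layers $L_1,\dots,L_h$ by downward (child) distance from $v$, and prove by induction that after $k$ steps the token front of this lineage is exactly $L_k$, with every node of $L_1\cup\dots\cup L_{k-1}$ already flipped to $b=\bz$ (hence removed from $\Tt^{\bo}$) and every node of $L_{k+1}\cup\dots\cup L_h$ still token-free with $b=\bo$.

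Next I would analyze the two ways a lineage leaves $\Tt^{\bo}$ and so stops contributing to $\Psi$. Either the front reaches the deepest layer $L_h$ and, at the following step, its children lie \emph{outside} $\Ss(\gamma_0,v)$---meaning they had $b=\bz$ at $\gamma_0$---so although they may grab the token via \Rtok, they do so with $b=\bz$, land in $\Tt^{\bz}$, and upon their own \Radd\/ fail to forward it (a child needs each parent in $\mathit{Tok}$, i.e. with $b=\bo$); the token then dies. Or the front meets another front, in which case the merge is again consistent because the colliding parents are synchronized by $\neg ErP$, and the two lineages fuse into one. In either case the set of nodes still reachable by an all-$\bo$ path from the lineage shrinks by at least the current layer each step; combined with $h\le n-1$, every initial lineage leaves $\Tt^{\bo}$ within $n$ steps, driving $\Psi$ to $0$ and giving $\Gamma_{\bar e}\triangleright\Gamma_{tr}$ in $O(n)$ steps. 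For closure, once $\Psi=0$ the only remaining source is the perpetual root token, and any node it later marks with $b=\bo$ is reached along a root path, hence not counted by $\psi$, so $\Psi$ stays $0$.

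The hardest part is the DAG geometry of $\Ss(\gamma_0,v)$: a node reachable by several downward paths appears in several per-token structures, so the raw sum $\sum_v|\Ss(\gamma,v)|$ risks double counting and is not obviously monotone. I would resolve this through the $\Gamma_{\bar e}$ synchronization: since any shared child is fed by parents in identical states, it absorbs a \emph{single} merged token, so the active front is always exactly one layer of the \emph{static} structure $\Ss(\gamma_0,v)$, and the quantity that truly governs the running time is its \emph{depth} rather than the (possibly larger) sum of sizes. Making this passage from ``the all-$\bo$ reach shrinks'' to ``depth bounds the time by $n$'' fully rigorous, while simultaneously keeping initial-token lineages cleanly separated from the root's token stream at the points where the two fronts merge, is where the real care is required.
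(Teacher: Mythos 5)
Your proposal follows the same route as the paper's own proof: fix an initial token $v\in \Tt^{\bo}(\gamma_0)$, study the static all-$\bo$ structure $\Ss(\gamma_0,v)$, show the token front descends one layer per step while the structure never gains nodes, and conclude extinction within $n$ steps because $r\notin \Ss(\gamma_0,v)$ bounds its size by $n-1$. Your substitution of the \emph{depth} of the structure for the sum-of-sizes potential is in fact a cleaner way to reach the $O(n)$ bound, since the raw sum does risk the double counting you identify when the structures of sibling children overlap; the paper itself glosses over this when it asserts a strict decrease of the summed potential.

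There is, however, a genuine overstatement that makes your inductive step fail as written. You claim that after $k$ steps the front is \emph{exactly} $L_k$ and that all deeper layers $L_{k+1},\dots,L_h$ are still token-free with $b=\bo$. Neither holds in general, because $\Gamma_{\bar e}$ only excludes $Er_{var}$, $Er_{tp}$ and $Er_N$: the rules \Rer\/ (via $Er_{prg}$, $FalseR$, $Er_\pi$), \RerRank\/ (via $\mathit{TakeO}$) and \Rreset\/ remain enabled there. Concretely, a node $u\in L_k$ whose parent set is not entirely inside the structure satisfies $\mathit{TakeO}(u)$ the moment the front reaches it (its outside parents offer no token, so $\mathit{Tok}(u)\neq P(u)$), hence $u$ executes \RerRank\/ and drops out of the front rather than joining it; and deeper layers need not stay all-$\bo$, since they may be traversed earlier by a \emph{different} initial lineage (structures of distinct initial tokens can overlap), or be turned to $\top$ by a propagating error wave. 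This is precisely why the paper's proof carries the explicit departure set $D(\gamma_1,v)$ and proves only a containment, namely that each $\Ss(\gamma_1,v_1)$ for $v_1\in C(\gamma_0,v)$ sits inside $\Ss(\gamma_0,v)\setminus\{v\}$, rather than an exact layer-by-layer description. The repair is straightforward and preserves your bound: weaken ``exactly $L_k$'' to ``contained in $L_k$'', and observe that every departure—error, reset, or a bit already flipped by another lineage—only removes nodes from the structure and therefore only hastens the death of the token. With that weakening, your depth argument goes through and matches the paper's conclusion.
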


\begin{proof}[Proof of lemma \ref{lem:CleanBadToken} ] Let us consider a node $v\in \Tt^{\bo}(\gamma_0)$ such that $v\in \Ss(\gamma_0,v')$, where $v'\in \Tt(\gamma_0)$ and its spanning structure is $\Ss(\gamma_0,v)$. Said differently, $v$ is the root of $\Ss(\gamma_0,v)$.
\begin{itemize}
\item To satisfy $Er_{tp}(\gamma_0, v) = \mathit{false}$, every $u \in P(\gamma_0, v)$ must have $t_u(\gamma_0) = \mathit{wait}$ and $b_u(\gamma_0) = \bz$. Consequently, $u \not\in \Ss(\gamma_0, v)$, ensuring that $\Ss(\gamma_0, v)$ does not contain a cycle involving $v$.
\item Now, we must prove that if $|\Ss(\gamma_0, v)| > 1$, no node that is not in $\Ss(\gamma_0, v)$ may join the spanning structure $\Ss(\gamma_1, v_1)$, where $v_1 \in C(\gamma_0, v)$. 
Note that since the configuration $\gamma_0$ is in $\Gamma_{\bar{e}}$, if a node $u \in \Ss(\gamma_0, v)$ has a child $w\in \Ss(\gamma_0, v)$, then all children of $u$ are in $\Ss(\gamma_0, v)$ (to satisfy $Er_{N}(\gamma_0, u) = \mathit{false}$). 
For a node $u$ to join $\Ss(\gamma_1, v_1)$, it must be a neighbor of a node already in $\Ss(\gamma_0, v)$.
\begin{itemize}
    \item Let us first consider the case of a node $u$ not in $\Ss(\gamma_0, v)$ but with a parent $p_u$ in $\Ss(\gamma_0, v)$ (see node $u_1$ in Figure~\ref{fig:SS1}). In other terms, $u$ is a descendant of $v$ in $\gamma_0$, but by definition of $\Ss(\gamma_0,v)$ we have $b_u(\gamma_0) = \bz$. For $b_u(\gamma_1)$ to change to $\bo$, $u$ must execute \Radd, this requires $u$ to have a token ($t_u(\gamma) = \mathit{true}$). 
    However, in this scenario, $p_u$ must have $t_{p_u}(\gamma_0) = \mathit{wait}$ and $b_{p_u}(\gamma_0) = 0$. Otherwise, $u$ would satisfy $Er_{tp}(\gamma_0, u) = \mathit{true}$, which contradicts the assumption that $\gamma_0 \in \Gamma_{\bar{e}}$. Moreover, $b_{p_u}(\gamma_0) = 0$ contradicts the definition that $p_u$ belongs to $\Ss(\gamma_0, v)$.
    So, $u$ cannot joint $\Ss(\gamma_1, v)$.
    \item Now, consider a node $u$ that is not in $\Ss(\gamma_0, v)$ (see node $u_2$ in Figure~\ref{fig:SS1}) and not a descendant of $v$, but has a neighbor $w \in \Ss(\gamma_0, v)$ . 
    In this case, $u$ is either a parent of $w$, a sibling of $w$, or a reset node (we already deal with the "child" possibility in the previous case).
    For $u$ to become a descendant of $v_1$ in $\gamma_1$, it must update its distance. 
    Because only reset nodes can adopt a new distance, its parents and siblings (which are not reset nodes by definition) cannot join $\Ss(\gamma_1, v_1)$, as it would require them to update their distance.
    In $\Ss(\gamma_0, v)$, only $v$ possesses the token, which implies $w = v$. Consequently, the reset node $u$ becomes a descendant of $v$, but not of $v_1$. 
    So, $u$ becomes the child of $v$, but since it has no token (\Rreach\/ does not modify $t$, which remains $\mathit{false}$), it does not belong to any $\Ss(\gamma_1, z)$, for all $z\in V$.
\end{itemize}
\item Due to the  predicate $Er_{prg}$, and rule \Rer, or due to rule \RerRank, a node $u \in \Ss(\gamma_0, v)$ may disappear from $\Ss(\gamma_1, v_1)$. 
We denoted the set of these nodes by $D(\gamma_1,v)$.
\item To conclude, if $\Tt^{\bo}(\gamma_0) \neq \emptyset$, for all $v \in \Tt^{\bo}(\gamma_0)$ with $|\Ss(\gamma_0, v)| > 1$, we have
$$
\Ss(\gamma_0, v) = \{v\}  \bigcup_{v_1 \in C(\gamma_0, v)} \Ss(\gamma_1, v_1) \cup D(\gamma_1),
$$
so $\Phi(\gamma_1) < \Phi(\gamma_0) - |\Tt^{\bo}(\gamma_0)|$.
\end{itemize}
Since initially $|\Ss(\gamma_0, v)|\leq n$, after configuration $\gamma_n$, no token induced by an initial token remains, so the initial token lifetime is at most $n$ steps.
\end{proof}

\subsubsection{Legal rank}

\begin{lemma}
    $\forall \gamma\in \Gamma_{tr}$, if $v\in \Tt(\gamma)$ then $Legal^d(\gamma,v)=\mathit{true}$.
    \label{lem:disRoot}
\end{lemma}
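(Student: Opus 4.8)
The plan is to argue by induction along an execution that has already entered $\Gamma_{tr}$, which is legitimate because $\Gamma_{tr}$ is an attractor (Lemma~\ref{lem:CleanBadToken}) and hence closed: once $\Psi=0$, no token stemming from an initial token is ever recreated, so the \emph{only} source of tokens is the root. I maintain the invariant ``every node of $\Tt(\gamma)$ has a legal rank''. The base case is the root, which permanently holds a token and satisfies $\rk_r=0=d^r_r\bmod 3$. For the inductive step from $\gamma_i$ to $\gamma_{i+1}$, a node that merely carries its token over keeps a legal rank, since none of the applicable rules alters the rank of a non-reset token holder; so the substantive case is a \emph{freshly acquired} token, obtained through \Rtok\/ or \Rreach. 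In either case the guard forces the senders to be exactly $\mathit{Tok}(v)=P(v)$, and the resulting rank of $v$ equals $(\rho+1)\bmod 3$, where $\rho=(\rk_v-1)\bmod 3$ is the common rank of all senders; by the induction hypothesis each sender $u\in\mathit{Tok}(v)\subseteq\Tt(\gamma_i)$ is legal, so $\rho=d^r_u\bmod 3$.

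The heart of the argument is then a distance computation, which I would preface with a clean structural fact: \emph{all senders lie at a single common distance $d^{*}$ from $r$}. Indeed, any two senders are neighbours of $v$, hence at mutual distance at most $2$, while their distances to $r$ are all $\equiv\rho\pmod 3$; two integers congruent modulo $3$ and differing by at most $2$ must be equal. Since the senders are neighbours of $v$, we also have $d^{*}\in\{d^r_v-1,\,d^r_v,\,d^r_v+1\}$, and legality of the senders gives $d^{*}\equiv\rho\equiv(\rk_v-1)\pmod 3$. Consequently $\rk_v(\gamma_{i+1})=(d^{*}+1)\bmod 3$, and among the three admissible values of $d^{*}$ exactly one, namely $d^{*}=d^r_v-1$, yields $\rk_v(\gamma_{i+1})=d^r_v\bmod 3$, i.e. $Legal^d(\gamma_{i+1},v)$. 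Thus the whole lemma reduces to showing that the senders are strictly closer to the root than $v$.

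Ruling out $d^{*}=d^r_v$ and $d^{*}=d^r_v+1$ is where I expect the main difficulty to lie, and it is precisely where synchrony becomes indispensable: tokens emitted by the root advance one level per synchronous step, so by the lock-step property they reach all nodes equidistant from $r$ simultaneously and never flow to a node at an equal or smaller distance. I would formalise this as a secondary, wave-type invariant on the prefix of the execution already known to be legal (available by the induction hypothesis): a legal token holder at genuine distance $d$ carries a token created at the root $d$ steps earlier along a shortest path, so its set of offering rank-parents coincides with its genuine predecessors at distance $d-1$. Combined with the error-freeness of $\Gamma_{tr}\subseteq\Gamma_{\bar e}$—in particular $Er_{tp}(v)=\mathit{false}$, which forbids a node ready to receive from having a token-bearing rank-child, and $Er_N(v)=\mathit{false}$ (the uniformity predicates $ErP$, $ErC$, $ErS$)—this excludes the two offending patterns: senders that are genuine siblings (distance $d^r_v$) or genuine children (distance $d^r_v+1$) of $v$ would force $v$'s genuine predecessor to appear as a rank-child still holding a token, or would break the synchronous phase of the wave, contradicting $\gamma_i\in\Gamma_{tr}$. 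Establishing this synchronous-wave invariant rigorously—so that the simultaneity demanded by $\mathit{TakeP}(v)$ can be met only at the genuine BFS frontier—is the delicate step; once it is in place, $d^{*}=d^r_v-1$ follows and the induction closes.
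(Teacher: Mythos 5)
Your reduction is correct as far as it goes, and it is genuinely different from the paper's argument: by your induction hypothesis all senders in $\mathit{Tok}(v)$ are legal, any two senders are at mutual distance at most $2$ while their distances to $r$ are congruent modulo $3$, hence they sit at one common distance $d^{*}\in\{d^r_v-1,\,d^r_v,\,d^r_v+1\}$, and only $d^{*}=d^r_v-1$ makes the freshly acquired rank legal. The paper instead inducts on the distance to the root \emph{inside} a fixed configuration of $\Gamma_{tr}$: for a token holder at distance $i+1$, its senders are treated as the parents at distance $i$, covered by the induction hypothesis together with the error-freeness predicates ($Er_{tp}$, $ErP$, $ErC$, $ErS$), so no auxiliary invariant about the history of the token wave is needed.

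However, the proposal has a genuine gap, and you flag it yourself: the exclusion of $d^{*}=d^r_v$ and $d^{*}=d^r_v+1$ --- equivalently, your ``synchronous-wave invariant'' stating that the simultaneity required by $\mathit{TakeP}(v)$ can only be realized by genuine BFS predecessors --- is never proved; you only assert that once it is in place the induction closes. This is not a peripheral detail. In $\Gamma_{tr}$ the ranks may still be arbitrary (the set $\mathcal{E}_k(\gamma)$ of illegally ranked nodes can be nonempty), so $P(v)$ need not bear any relation to genuine distances, and relating the rank structure to true distances is precisely the content of Lemma~\ref{lem:disRoot}; the predicates $Er_{tp}$ and $Er_N$ constrain the $t$/$b$ states of neighbours but, by themselves, say nothing about genuine distances, so turning them into your wave invariant would require an induction of exactly the kind you are trying to avoid. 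A second, smaller hole is the base case of your time induction: at the first configuration of $\Gamma_{tr}$, the set $\Tt$ may contain non-root nodes (they acquired root-issued tokens one step earlier, in a configuration \emph{outside} $\Gamma_{tr}$ where your invariant is not available), so ``the root'' is not a valid base case; the paper avoids this issue entirely by arguing within an arbitrary configuration of $\Gamma_{tr}$ rather than along the execution.
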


\begin{proof}[Proof of lemma \ref{lem:disRoot} ]
Consider $\gamma \in \Gamma_{tr}$, and a node $v\in V$. 
We proceed by induction on the distance from $v$ to the root:
\begin{description}
    \item[Base Case:]  
The root $r$ is at distance $0$ from itself, and by definition, $Legal^d(\gamma, r) = \mathit{true}$ for all configurations $\gamma \in \Gamma$.
\item[Induction Hypothesis:]  
Assume that every node $v$ at distance $i$ from $r$, such that $v \in \Tt(\gamma)$, satisfies $Legal^d(\gamma, v)$.

\item[Inductive Step:]  
We must prove that for any node $v$ at distance $i+1$ from $r$, if $v \in \Tt(\gamma)$, then $v$ satisfies $Legal^d(\gamma, v)$.  
If $v \in \Tt(\gamma)$, this implies that $v$ has just executed \Rtok. 
Thus, $v$ is not in error, and the following conditions hold:  
All its parents $u$ at distance $i$ from $r$ have $t_u(\gamma)=\mathit{wait}$ and a correct distance (by induction hypothesis).  
All its siblings share the same state as $v$.  
If $v$ has children, all its children also share the same state (as $\gamma\in\Gamma_{\bar e}$).

As a result, $v$ did not just modify its distance. 
Therefore, its rank $\rk_v$ satisfies $\rk_v(\gamma) = (\rk_u(\gamma)  + 1) \mod 3$. 
Since $\rk_u(\gamma) = i \mod 3$, we derive:  
$$
\rk_v(\gamma) = ((i \mod 3) + 1) \mod 3 = (i + 1) \mod 3.
$$
Thus, node $v$ satisfies $Legal^d(\gamma, v)$.
\item[Conclusion:]
By induction principle, the property is satisfied for every node in $G$.
\end{description}
\end{proof}

\subsubsection{Tokens dissemination}
\label{sec:tokDis}
Given the possible values $\{\mathit{false}, \mathit{true}, \mathit{wait}\}$ for variable $t$, the root's neighbors receive a token (\textit{i.e.}, $t_v = \mathit{true}$) every 3 steps. 
To ensure that a node $v$ located at distance $d^r_v$ from the root receives a token, the root must transmit $2^{d^r_v}$ tokens, assuming no token is already present and that every ancestor $u$ of $v$ has not yet registered the presence of a token (\textit{i.e.}, $b_u = 0$). 
Specifically, every root's neighbor $w$ must transmit $2^{d^r_v-1}-1$ tokens, which must subsequently pass through $d^r_v-1$ intermediate nodes between $w$ and $v$. As a result, a node $v$ located at distance $d^r_v$ from the root receives the token every $\mathcal{R}(v)$ steps, where $\mathcal{R}(v)$ is defined as follows:
\begin{equation}
  \mathcal{R}(v)=(2^{d^r_v-1} -1)\times 3 + d^r_v-1 
\end{equation}

Note that, for every $v\in V\setminus \{r\}:d^r_v\leq \varepsilon$, where $\varepsilon$ denotes the eccentricity of $r$. 
If an ancestor $u$ of $v$ already possesses a token or has registered the presence of a token (\textit{i.e.}, $b_u=1$), $v$ may receive the token faster. 

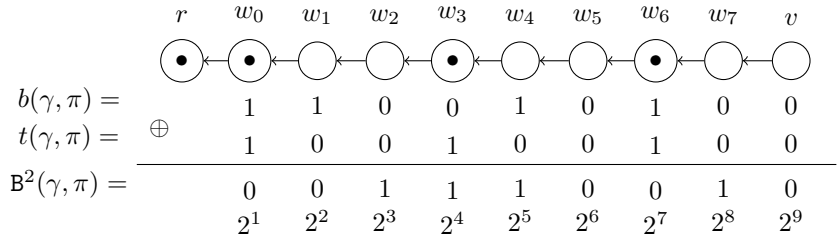
\begin{figure}[htbp]
\begin{center}
\begin{tikzpicture}[scale=0.6]

\node[] (t) at (-2.5,0) {};
\node[below=0.1cm of t] (tb) {$b(\gamma,\pi)=$};
\node[below=0.6cm of t] (tt) {$t(\gamma,\pi)=$};
\node[below=1.2cm of t] (tbb) {$\Bb^2(\gamma,\pi)=$};
\node[circle,draw,minimum size=0.5cm] (u) at (0,0) {\textbullet};
\node[above=0.1cm of u] (un) { $r$};
\node[circle,draw,,minimum size=0.5cm] (w0) at (1.5,0) {\textbullet};
\node[above=0.1cm of w0] (w0t) { $w_0$};
\node[below=1.6cm of w0] () { $2^1$};

\node[below=0.1cm of w0] (w0b) {$1$};
\node[below=0.6cm of w0] (w0t) {$1$};
\node[below=1.2cm of w0] (w0bb) {$0$};

\draw[<-] (u) -- (w0);
\node[circle,draw,minimum size=0.5cm] (w1) at (3,0) {};
\node[above=0.1cm of w1] (w1t) { $w_1$};
\node[below=1.6cm of w1] () { $2^2$};
\node[below=0.1cm of w1] (w1b) {$1$};
\node[below=0.6cm of w1] (w1t) {$0$};
\node[below=1.2cm of w1] (w1bb) {$0$};

\draw[<-] (w0) -- (w1);
\node[circle,draw,minimum size=0.5cm] (w2) at (4.5,0) {};
\node[above=0.1cm of w2] (w2t) { $w_2$};
\node[below=1.6cm of w2] () { $2^3$};
\node[below=0.1cm of w2] (w2b) {$0$};
\node[below=0.6cm of w2] (w2t) {$0$};
\node[below=1.2cm of w2] (w2bb) {$1$};

\draw[<-] (w1) -- (w2);
\node[circle,draw,minimum size=0.5cm] (w3) at (6,0) {\textbullet};
\node[above=0.1cm of w3] (w3t) { $w_3$};
\node[below=1.6cm of w3] () { $2^4$};
\node[below=0.1cm of w3] (w3b) {$0$};
\node[below=0.6cm of w3] (w3t) {$1$};
\node[below=1.2cm of w3] (w3bb) {$1$};
\draw[<-] (w2) -- (w3);
\node[circle,draw,minimum size=0.5cm] (w4) at (7.5,0) {};
\node[above=0.1cm of w4] (w4t) { $w_4$};
\node[below=1.6cm of w4] () { $2^5$};
\node[below=0.1cm of w4] (w4b) {$1$};
\node[below=0.6cm of w4] (w4t) {$0$};
\node[below=1.2cm of w4] (w4bb) {$1$};
\draw[<-] (w3) -- (w4);
\node[circle,draw,minimum size=0.5cm] (w5) at (9,0) {};
\node[above=0.1cm of w5] (w5t) { $w_5$};
\node[below=1.6cm of w5] () { $2^6$};
\node[below=0.1cm of w5] (w5b) {$0$};
\node[below=0.6cm of w5] (w5t) {$0$};
\node[below=1.2cm of w5] (w5bb) {$0$};
\draw[<-] (w4) -- (w5);
\node[circle,draw,minimum size=0.5cm] (w6) at (10.5,0) {\textbullet};
\node[above=0.1cm of w6] (w6t) { $w_6$};
\node[below=1.6cm of w6] () { $2^7$};
\node[below=0.1cm of w6] (w6b) {$1$};

\node[below=0.6cm of w6] (w6t) {$1$};
\node[below=1.2cm of w6] (w6bb) {$0$};
\draw[<-] (w5) -- (w6);
\node[circle,draw,minimum size=0.5cm] (w7) at (12,0) {};
\node[above=0.1cm of w7] (w7t) { $w_7$};
\node[below=1.6cm of w7] () { $2^8$};
\node[below=0.1cm of w7] (w7b) {$0$};
\node[below=0.6cm of w7] (w6t) {$0$};
\node[below=1.2cm of w7] (w6bb) {$1$};

\draw[<-] (w6) -- (w7);
\node[circle,draw,minimum size=0.5cm] (v) at (13.5,0) {};
\node[above=0.1cm of v] (vt) { $v$};
\node[below=1.6cm of v] () { $2^{9}$};
\node[below=0.1cm of v] (vb) {$0$};
\node[below=0.6cm of v] (vt) {$0$};
\node[below=1.2cm of v] (vbb) {$0$};
\draw[<-] (w7) -- (v);

\node[minimum size=0.5cm] (b) at (15,0) {};

\draw (-1,-2.3)-- (14.5,-2.3);
\node (ad) at (-0.5, -1.5){$\oplus$};
\end{tikzpicture}
\vspace{0.5cm}

\caption{Dotted nodes denote nodes with a token. We have $d^r_v=9;\mathcal{R}(v)=(2^{9-1}-1)\times 3 +9-1=773;\Bb^2(\gamma,v)=10011100;\Bb^{10}(\gamma,v)=156;\mathcal{R}^c(\gamma,v)=773-156=617$ steps. }
\label{fig:Bin}
\end{center}
\end{figure}

We define $\Bb^2(\gamma,\pi)$ as the binary words obtained by the binary addition (denoted by $\oplus$) of $b(\gamma,\pi)$ and $t(\gamma,v)$ defined in section~\ref{sec:legal}. Then, $\Bb^{10}(\gamma, \pi)$ denotes the decimal representation of $\Bb^2(\gamma, \pi)$ assuming the highest weight for bits further from the root (see Figure~\ref{fig:Bin}). 
Considering a path $\pi \in \Pi(\gamma, r, v)$, we denote by $\hat{\pi}(\gamma, v)$ the path with the nearest token to $v$.
\begin{equation}
    \hat{\pi}(\gamma,v)=\max_{\forall \pi \in \Pi(\gamma,r,v)} \Bb^{10}(\gamma,\pi) 
\end{equation}
Lastly, $\mathcal{R}^c(\gamma, v)$ represents the number of steps the system requires for node $v$ to receive a token if  $v$ never satisfied $Er(v)$.
\begin{equation}
    \mathcal{R}^c(\gamma,v)=\mathcal{R}(v)-\hat{\pi}(\gamma,v)
\end{equation}

Figure~\ref{fig:Bin} illustrates an example of the number of steps required for a node $v$ to receive a token.

\subsubsection{Removing residual errors}

This section focuses on the elimination of residual errors including the incorrect ranks. 
The elimination of incorrect ranks occurs either through the application of rule \Rer\ due to predicate $Er_{prg}$, or through the application of rule \RerRank.
Consider a configuration $\gamma\in \Gamma_{tr}$, the set $\mathcal{E}_k(\gamma)$ is the set of nodes that execute rule \RerRank\/, should a token be proposed to them. 
\begin{equation}
    \mathcal{E}_k(\gamma)=\{v \in V|\rk_v(\gamma)\neq \bot \wedge Legal^d(\gamma,v)=\mathit{false}\}
\end{equation}
Note that, cycles contain at least one node in $\mathcal{E}_k(\gamma)$ and a spanning structure that contains no cycle may have no node in $\mathcal{E}_k(\gamma)$.
The set $\mathcal{E}_\pi(\gamma)$ captures the node with $Legal^\pi(\gamma,v)=\mathit{false}$:
\begin{equation}
    \mathcal{E}_\pi(\gamma)=\{v \in V|\rk_v(\gamma)\neq \bot \wedge Legal^\pi(\gamma,v)=\mathit{false}\}
\end{equation}

\paragraph{Cleaner wave}
\label{subsub:clean}
The set $\mathcal{E}(\gamma)$ is the set of error nodes in configuration $\gamma$:
\begin{equation}
    \mathcal{E}(\gamma)=\{v\in V|b_v(\gamma)=\top\}
\end{equation}

The predicate $\pi^G(\gamma, v, u)$ verifies if there exists a path in the graph $G$ between a node $v$ in $\mathcal{E}_k(\gamma)$, and a node $u$ in $\mathcal{E}(\gamma)$. So for every $v\in \mathcal{E}_k(\gamma)$ and $u\in \mathcal{E}(\gamma)$, predicate $\pi^G(\gamma,v,u)$ is defined as follow:
\begin{equation}
\begin{split}
    \pi^G(\gamma,v,u)\equiv \big(\exists v_0,v_1,\dots,v_j\mid  \forall i\in \{0,\dots,j\}, v_i\in V\setminus\{r\} \wedge v_i\not \in \mathcal{E}(\gamma)\big) \wedge \\(\forall i<j \mid v_i\in N(v_{i+1})) \wedge v\in N(v_0) \wedge v_j\in N(u))
\end{split}
\end{equation}
If a node $v \in \mathcal{E}_k(\gamma)$ is such that there is at least one node $u$ such that $\pi^G(\gamma, v, u) = \mathit{true}$, then node $v$ executes rule \Rer\/ due to $Er_{prg}$ in one of the subsequent configurations. The nodes responsible for this behavior are referred to as \textit{cleaner nodes} denoted by $Cl$, and are formally defined as follows:
\begin{equation}
    Cl(\gamma,v)=\{u\in \mathcal{C}(v)| \pi^G(\gamma,v,u)=\mathit{true}\} 
\end{equation}
More precisely, node $v \in \mathcal{E}_k(\gamma)$ eventually executes \Rer\/ due to its nearest cleaner node. This is captured by function $E(\gamma,v)$ that returns the distance of the nearest cleaner node.
\begin{equation}
    E(\gamma,v)=
    \begin{cases}
    \min_{\forall u \in Cl(\gamma,v)} d^u_v & \text{ if }Cl(\gamma,v)\neq\emptyset\\
    \infty & \text{ otherwise}
    \end{cases}
\end{equation}
The function $E(\gamma,v)$ computes how long it takes for a cleaner wave to reach $v$.

\paragraph{Legal rank wave}
\label{subsub:goodrank}
In configuration $\gamma$ in $\Gamma_{tr}$, it may happen that a node $v$ with a token originating from the root no longer has a path to the root (\textit{e.g.}, because some nodes on the path are now reset nodes). 
However, as explained in the previous subsection, reset nodes can be considered like nodes $u$ in the path between $r$ and $v$ with $t_u(\gamma) = \mathit{false}$ and $b_u(\gamma) = \bz$.
Only the acceptance of a new token by the neighbors of the root can reduce $\mathcal{R}^c(\gamma, v)$ by one for any $v\in V$. This happens every 3 steps. To capture this we introduce the function $\hat{\mathcal{R}^c}(\gamma, v)$ as follows:

\begin{equation}
\hat{\mathcal{R}^c}(\gamma,v)=
    \begin{cases}
    \mathcal{R}^c(\gamma,v)&\text{ if } \forall u \in N(r)|t_u(\gamma)=\mathit{true}\\
    \mathcal{R}^c(\gamma,v)- 0.3 &\text{ if } \forall u \in N(r)|t_u(\gamma)=\mathit{wait}\\
    \mathcal{R}^c(\gamma,v)-0.6&\text{ if } \forall u \in N(r)|t_u(\gamma)=\mathit{false}\\
    \end{cases}
\label{eq:hatRc}
\end{equation}

Once $\mathcal{R}^c(\gamma,v)$ reaches zero, the token must descend to the relevant node. To capture this process, we examine all ancestors of $v$ and analyze the position of the tokens that $v$ is expected to receive.

Recall that for a node $v$ to receive a token from node $w$, all nodes $u$ between the token (or tokens) and $v$ must have $b_u = \bo$. 
If $v$ may receive a token from its closest ancestor $w$ from configuration $\gamma$, $\beta(\gamma,v)$ returns the number of steps that are necessary to do so. 

\begin{equation}
    \beta(\gamma,v)=
    \begin{cases}
    d_v^u&\text{ if } \exists u\in \Tt^{\bo}(\gamma) \mid v\in \Ss(\gamma,u)\\
    \hat{\mathcal{R}^c} &  \text{ otherwise }
    \end{cases}
\end{equation}

\paragraph{Legal path wave}
Due to an erroneous initial configuration $\gamma\in \Gamma_{tr}$, there may exist a node $v$ in the network that does not satisfy $Legal^\pi(\gamma,v)$, but is neither a reset node nor an error node (see \textit{e.g.}, node $v$ in Figure~\ref{fig:badPath}(a)). 
If a node $v$ has $Legal^\pi(\gamma,v) = \mathit{false}$ because at least one of its ancestors is a false root $u$ (\textit{i.e.}, there is no path between the root and $v$ passing through $u$), then in configuration $\gamma$, $u$ satisfies $\mathit{FalseR}(\gamma,u)$ and executes \Rer\/. Then, $v$ is cleaned by the propagation of the clean wave (see section~\ref{subsub:clean}).
If $r$ is an ancestor of $v$ regardless of the paths taken, and $Legal^\pi(\gamma,v) = \mathit{false}$, this implies that the set of paths is not synchronized with respect to variables $t$ and $b$ (see predicate~\ref{eq:legalpi}). 
In this case, either $v$ is cleaned by a clean wave (see section~\ref{subsub:clean}), or $v$ receives a token from a subset of its parents. In the latter case, it then satisfy $TakeO(v)$ (see predicate~\ref{eq:TakeO}) and executes rule \RerRank, which marks the node as erroneous. 

Consider a configuration $\gamma_i\in\Gamma_{tr}$ where $\forall v\in V| Er(\gamma_i,v)=\mathit{false}$. Suppose that there exist two paths of equal length, $\pi_1$ and $\pi_2$, starting from the root and ending respectively at $x_1$ and $x_2$, such that $Legal(\gamma_i,x_1)=true$ and $Legal(\gamma_i,x_2)=true$, but with either $b(\gamma,\pi_1)\neq b(\gamma,\pi_2)$ or $t(\gamma,\pi_1)\neq t(\gamma,\pi_2)$. If $x_1$ and $x_2$ are not in the same $r$-component, then for all $j>i$, we have $\forall v\in V,\; Er(\gamma_j,v)=false$. On the other hand, if these two nodes belong to the same $r$-component, then there exists a path between $x_1$ and $x_2$ that does not pass through $r$. 
Let $v$ denote the first reset node shared by the extensions of paths $\pi_1$ and $\pi_2$ (see Figure~\ref{fig:badPath}(b)). If $v$ joins either $\pi_1$ or $\pi_2$ in configuration $\gamma_j$ with $j>i$, then in $\gamma_j$ we have $Legal^\pi(\gamma_j,v)=false$. 
To prevent this scenario, when $v$ is offered the token by a node from either $\pi_1$ or $\pi_2$ (but not both), it enters an error state via $Er_\pi(\gamma_j,v)$, thereby enabling the cleaning of both paths $\pi_1$ and $\pi_2$.
The number of steps required to receive one or more tokens is captured by the function $\beta(\gamma,v)$ (see section~\ref{subsub:goodrank}).

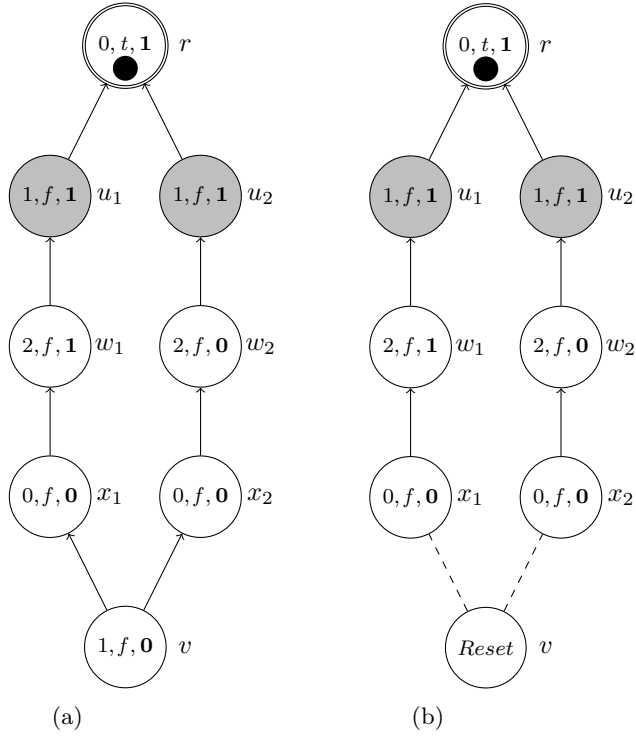
\begin{figure}
\begin{center}
        \begin{subfigure}{0.10\textwidth}
       \begin{tikzpicture}
    \node[circle, draw, minimum size=1.1cm, double] (v) at (1,0) {\footnotesize$0,\mathit{t},\bo$};
    \node[below of=v, circle, node distance=0.3cm, fill=black] {};
    \node[right of=v, node distance=0.8cm] {$r$};
    \node[circle, draw, minimum size=1cm, fill=lightgray] (u) at (0,-2) {\footnotesize$1,\mathit{f},\bo$};
    \node[right of=u, node distance=0.8cm] {$u_1$};
    \node[circle, draw, minimum size=1cm, fill=lightgray] (u2) at (2,-2) {\footnotesize$1,\mathit{f},\bo$};
    \node[right of=u2, node distance=0.8cm] {$u_2$};
    \node[circle, draw, minimum size=1cm] (w) at (0,-4) {\footnotesize$2,\mathit{f},\bo$};
    \node[right of=w, node distance=0.8cm] {$w_1$};
    \node[circle, draw, minimum size=1cm] (w2) at (2,-4) {\footnotesize$2,\mathit{f},\bz$};
    \node[right of=w2, node distance=0.8cm] {$w_2$};
    \node[circle, draw, minimum size=1cm] (x) at (0,-6) {\footnotesize$0,\mathit{f},\bz$}; 
    \node[right of=x, node distance=0.8cm] {$x_1$};
    \node[circle, draw, minimum size=1cm] (x2) at (2,-6) {\footnotesize$0,\mathit{f},\bz$}; 
    \node[right of=x2, node distance=0.8cm] {$x_2$};
     \node[circle, draw, minimum size=1cm] (y) at (1,-8) {\footnotesize$1,\mathit{f},\bz$}; 
    \node[right of=y, node distance=0.8cm] {$v$};
    \draw[->] (u) to (v);
    \draw[->] (u2) to (v);
    \draw[->] (w) to (u);
    \draw[->] (w2) to (u2);
    \draw[->] (x) to (w); 
    \draw[->] (x2) to (w2); 
    \draw[->] (y) to (x2); 
    \draw[->] (y) to (x); 
\end{tikzpicture}
\caption{}
\end{subfigure}
\hspace{3cm}
\begin{subfigure}{0.10\textwidth}
\begin{tikzpicture}
    \node[circle, draw, minimum size=1.1cm, double] (v) at (1,0) {\footnotesize$0,\mathit{t},\bo$};
    \node[below of=v, circle, node distance=0.3cm, fill=black] {};
    \node[right of=v, node distance=0.8cm] {$r$};
    \node[circle, draw, minimum size=1cm, fill=lightgray] (u) at (0,-2) {\footnotesize$1,\mathit{f},\bo$};
    \node[right of=u, node distance=0.8cm] {$u_1$};
    \node[circle, draw, minimum size=1cm, fill=lightgray] (u2) at (2,-2) {\footnotesize$1,\mathit{f},\bo$};
    \node[right of=u2, node distance=0.8cm] {$u_2$};
    \node[circle, draw, minimum size=1cm] (w) at (0,-4) {\footnotesize$2,\mathit{f},\bo$};
    \node[right of=w, node distance=0.8cm] {$w_1$};
    \node[circle, draw, minimum size=1cm] (w2) at (2,-4) {\footnotesize$2,\mathit{f},\bz$};
    \node[right of=w2, node distance=0.8cm] {$w_2$};
    \node[circle, draw, minimum size=1cm] (x) at (0,-6) {\footnotesize$0,\mathit{f},\bz$}; 
    \node[right of=x, node distance=0.8cm] {$x_1$};
    \node[circle, draw, minimum size=1cm] (x2) at (2,-6) {\footnotesize$0,\mathit{f},\bz$}; 
    \node[right of=x2, node distance=0.8cm] {$x_2$};
     \node[circle, draw, minimum size=1cm] (y) at (1,-8) {\footnotesize$Reset$}; 
    \node[right of=y, node distance=0.8cm] {$v$};
    \draw[->] (u) to (v);
    \draw[->] (u2) to (v);
    \draw[->] (w) to (u);
    \draw[->] (w2) to (u2);
    \draw[->] (x) to (w); 
    \draw[->] (x2) to (w2); 
    \draw[-,dashed] (y) to (x2); 
    \draw[-,dashed] (y) to (x); 
\end{tikzpicture}
\caption{}
\end{subfigure}
\label{fig:notLegalPi}
        \caption{ The gray nodes are activatable. The information displayed on each node, respectively, corresponds to the rank, the value of the variable $t$, and the value of the variable $b$. (a) The node $v$ is not in error in this configuration, but it does not satisfy $Legal^{\pi}(\gamma,v)$. In fact, regarding the variable $b$, the path $u_1, w_1, x_1$ currently has the value $110$, while  $u_2, w_2, x_2$ has $100$, so $Legal^{\pi}(\gamma,v)=\mathit{false}$. This discrepancy means that $v$ will receive the next token from node $x_1$, so $v$ will receive a token from only one of its parents rather than from both. Consequently, $TakeO(v)$ will be true, and node $v$ will execute \RerRank. (b) In this case as well, $x_1$ will obtain the token before $x_2$. Consequently, node $v$ executes \Rer\/ due to $Er_\pi(v)$, and in the next step, both $x_1$ and $x_2$ enter an error state. Step by step, the two paths are deleted.  }
        \label{fig:badPath}
      \end{center}

\end{figure}

\paragraph{Proof of residual error removal}

Let $\rho:\Gamma_{tr}\times  V \rightarrow \mathbb{N}$ be the following function:
\begin{equation}
    \rho(\gamma,v)=
    \begin{cases}
    \min\{E(\gamma,v),\beta(\gamma,v)\}\times n & \text{ if } v\in \mathcal{E}_k(\gamma) \vee v\in \mathcal{E}_\pi(\gamma) \\
    E(\gamma,v)\times n & \text{ if } v\not\in \mathcal{E}_k(\gamma) \wedge v\not\in \mathcal{E}_\pi(\gamma) \wedge Cl(\gamma,v)\neq \emptyset \\
    1 & \text{ if } v\in \mathcal{E}(\gamma)  \\
   0& \text{ otherwise }\\
    \end{cases}
\end{equation}
Let $\varrho:\Gamma_{tr}\times V \rightarrow \mathbb{N}$ : 
$$\varrho(\gamma)=\sum_{v\in V}\rho (\gamma,v)$$
Finally, we define the set of configurations $\Gamma_{cl}$ as follows $\Gamma_{cl}=\{\gamma\in \Gamma_{tr}:\varrho(\gamma)=0\}$. 

\begin{lemma}
$\Gamma_{tr} \triangleright \Gamma_{cl}$ in $O(2^\epsilon)$ steps.
\label{lem:Clean} 
\end{lemma}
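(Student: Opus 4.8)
The plan is to use the potential function $\varrho$ as the progress measure for the attractor, and to read off the $O(2^{\varepsilon})$ bound from the per-node resolution times $E(\gamma,v)$ and $\beta(\gamma,v)$ rather than from the raw magnitude of $\varrho$. First I would check that every summand $\rho(\gamma,v)$ is finite on $\Gamma_{tr}$: the cleaner distance $E(\gamma,v)$ is bounded by the diameter of the $r$-component $\mathcal{C}(v)$, hence by $n$, while $\beta(\gamma,v)\le \mathcal{R}(v)=(2^{d^r_v-1}-1)\cdot 3+d^r_v-1=O(2^{d^r_v})=O(2^{\varepsilon})$ by the token-dissemination count of Section~\ref{sec:tokDis}. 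Thus $\max_{v}\min\{E(\gamma_0,v),\beta(\gamma_0,v)\}=O(2^{\varepsilon})$, and the argument reduces to showing that \emph{(i)} each residual error at $v$ is removed within $\min\{E(\gamma_0,v),\beta(\gamma_0,v)\}$ steps, and \emph{(ii)} $\varrho$ is strictly decreasing while positive, so the system cannot livelock before reaching $\Gamma_{cl}$.

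Next I would treat the three ways a residual error is eliminated, each matching one branch of $\rho$. A node carrying an error flag ($v\in\mathcal{E}(\gamma)$, i.e. $b_v=\top$) executes \Rreset\ in one step and becomes a reset node, so its unit cost vanishes at once. For the \emph{cleaner wave}, a node $v$ with $Cl(\gamma,v)\ne\emptyset$ is reached by an error flag travelling along a flag-free path of length $E(\gamma,v)$: at each synchronous step a neighbour of a $\top$-node that still has a valid rank satisfies $Er_{prg}$ and fires \Rer, so the flag advances exactly one hop and $E$ drops by one. For the \emph{legal-rank wave}, a node $v\in\mathcal{E}_k(\gamma)$ (valid but wrong rank, which by Lemma~\ref{lem:disRoot} cannot currently hold a root token) is cleaned the moment a token descends to it: since its rank disagrees with $d^r_v\bmod 3$, the arriving token comes from a non-parent or from only part of $P(v)$, so $\mathit{TakeO}(v)$ holds and $v$ fires \RerRank; the arrival time is $\beta(\gamma,v)$, whose strict decrease per step is encoded by the fractional decrements of $\hat{\mathcal{R}^c}$ (capturing that the root's neighbours re-emit a token only every three steps) and by the $d^u_v$ branch when a token is already descending inside $\Ss(\gamma,u)$. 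The \emph{legal-path wave} handles $v\in\mathcal{E}_\pi(\gamma)$ identically, using a cleaner wave, or $\mathit{TakeO}$ when a token reaches $v$ along only one of two desynchronised parent-paths, or $Er_\pi$ at the first shared reset descendant of two inconsistent equal-length paths (Figure~\ref{fig:badPath}); in all cases resolution occurs within $\min\{E(\gamma,v),\beta(\gamma,v)\}$ steps.

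With per-node progress in hand, I would prove $\varrho(\gamma_{i+1})<\varrho(\gamma_i)$ whenever $\varrho(\gamma_i)>0$. The multiplicative factor $n$ in the first two branches of $\rho$ is exactly what makes this work: a transition converting a node from the ``to be cleaned'' branches (cost $\ge n$) into the error-flag branch (cost $1$) strictly decreases the potential, while a transition that merely advances a wave decreases some $E$ or $\beta$ by at least one (hence the corresponding summand by at least $n$, or by the fractional amount for $\hat{\mathcal{R}^c}$) and leaves the others non-increasing. Resetting a $\top$-node removes its unit cost; although this may push up to $|N(v)|$ valid-rank neighbours into the flag state, those neighbours were already charged in the cleaner-wave branch with $E=1$, so the net change is negative. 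Monotonicity yields convergence to $\Gamma_{cl}$, and combining it with step~\emph{(i)} shows all residual errors are gone after $\max_v\min\{E(\gamma_0,v),\beta(\gamma_0,v)\}=O(2^{\varepsilon})$ steps. Closure of $\Gamma_{cl}$ then follows because, once $\varrho=0$, every node is legal or reset: legal nodes stay legal by the argument of Lemma~\ref{lem:clos}, and a reset node can only re-enter through \Rreach, which by $\mathit{TakeR}$ forces all its token-senders to share a rank with no outside neighbour sharing it, so by Lemma~\ref{lem:disRoot} it rejoins with the correct rank and a consistent path, creating no new member of $\mathcal{E}$, $\mathcal{E}_k$ or $\mathcal{E}_\pi$.

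The main obstacle, and the step I would spend the most care on, is guaranteeing that the cleaning dynamics never manufacture \emph{new} residual errors outside the scope of $\varrho$. Two points are delicate. First, a node reset by a cleaner wave or by \Rreset\ must rejoin only when $\mathit{TakeR}$ certifies a unique candidate rank; otherwise a fresh $\mathcal{E}_k$- or $\mathcal{E}_\pi$-node could appear downstream and $\varrho$ could rebound. I would settle this using Lemma~\ref{lem:CleanBadToken} (no spurious tokens survive in $\Gamma_{tr}$, so every token offered to a reset node genuinely emanates from the root) together with Lemma~\ref{lem:disRoot}. Second, I must verify that the three waves interfere benignly: when a cleaner wave and an incoming token both target the same $\mathcal{E}_k$- or $\mathcal{E}_\pi$-node, the $\min\{E,\beta\}$ in $\rho$ already charges the node to whichever arrives first, so no double counting or non-monotone jump occurs; and a cycle of wrong-rank nodes, which by construction contains at least one node of $\mathcal{E}_k(\gamma)$, is necessarily broken because the root's token flow reaches it within $\beta=O(2^{\varepsilon})$ steps and triggers \RerRank\ there, after which the cleaner wave dismantles the remainder. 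Establishing these non-interference and no-regeneration properties rigorously is where the bulk of the case analysis lies.
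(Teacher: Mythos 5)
Your proposal is correct and takes essentially the same route as the paper's proof: the same potential function $\varrho$, the same case analysis over the branches of $\rho$ (cleaner wave decrementing $E$, token-arrival wave decrementing $\beta$ via $\hat{\mathcal{R}^c}$, unit cost for $\top$-flagged nodes resolved by \Rreset), and the same $O(2^{\varepsilon})$ bound read off from the token-dissemination analysis. Your extra attention to closure of $\Gamma_{cl}$ and to non-regeneration of errors tightens points the paper leaves implicit, but it does not constitute a different argument.
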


\begin{proof}[Proof of lemma \ref{lem:Clean} ] 
Let $\gamma_0$ be a configuration of $\Gamma_{tr}$.

If $v\in \mathcal{E}_k(\gamma_0)\cup \mathcal{E}_\pi(\gamma_0)$ and $\rho(\gamma_0,v)=\beta(\gamma_0,v)\times n$, then only the change of variable $t$ by the root's neighbors may reduce $\hat{\mathcal{R}^c}(\gamma_0, v)$, and this happens at every step since the system in synchronous. So, $\beta(\gamma_{1},v)<\beta(\gamma_{0},v)$. 
 
If $v\in \mathcal{E}_k(\gamma_0)\cup \mathcal{E}_\pi(\gamma_0)$ and $\rho(\gamma_0,v)=E(\gamma_0,v)\times n$, the neighbor $w$ of $u$ such  that $u$ is the nearest cleaner node of $v$ executes rule \Rer\/ and becomes the nearest cleaner node of $v$. As a consequence $E(\gamma_{1})=E(\gamma_{0})-1$. 

If $v\in \mathcal{E}_k(\gamma_0)\cup \mathcal{E}_\pi(\gamma_0)$ and $\beta(\gamma_0,v)=E(\gamma_0,v)>1$, the configuration $\gamma_{1}$ combine the two previous cases, and $\min\{E(\gamma_1,v),\beta(\gamma_1,v)\}<\min\{E(\gamma_0,v),\beta(\gamma_0,v)\}$.

Now,  $v\in \mathcal{E}_k(\gamma_0)\cup \mathcal{E}_\pi(\gamma_0)$ and $\beta(\gamma_0,v)=E(\gamma_0,v)=1$ (note that in this case $\rho(\gamma_0,v) = n$). 
That means $v$ can execute either \Rer\/ or \RerRank\/. In configuration $\gamma_{1}$, $v \in \mathcal{E}(\gamma_{1})$, so $\rho(\gamma_{1},v) = 1$. Hence $\rho(\gamma_{1},v)<\rho(\gamma_0,v)$.

If $v\not\in \mathcal{E}_k(\gamma_0)\cup \mathcal{E}_\pi(\gamma_0)$ and $Cl(\gamma_0,v)\neq \emptyset$, the neighbor $w$ of $u$ such  that $u$ is the nearest cleaner node of $v$ executes rule \Rer\/, and becomes the nearest cleaner node of $v$. As a consequence, $E(\gamma_{1})=E(\gamma_{0})-1$, so $\rho(\gamma_1,v) < \rho(\gamma_{0},v)$.

Finally a node $v$ in $\mathcal{E}(\gamma_0)$ executes rule \Rreset, so $\rho(\gamma_1,v)=0 < \rho(\gamma_{0},v)=1$.

Overall, in any possible case, the value returned by $\varrho(\gamma)$ strictly decreases by a strictly positive lower bounded value (the only non integer value we used is $0.3$, see Equation~\ref{eq:hatRc}). 
So, the strictly decreasing function $\varrho$ reaches zero. 
The number of steps to reach a configuration $\gamma$ where $\varrho(\gamma) =0$ is bounded by $2^\epsilon$  (see section~\ref{sec:tokDis} for time complexity).
 
\end{proof}

We derive the two following corollaries from the proof Lemma~\ref{lem:Clean}.

\begin{corollary}
$\forall \gamma\in \Gamma_{cl}$ and $\forall v\in V$, $v$ either satisfies $Legal(\gamma,v)$ or $Reset(\gamma,v)$.
\label{cor:legalReset}
\end{corollary}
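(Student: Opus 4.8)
The plan is to read the characterization off directly from the definition of $\Gamma_{cl}$ and the shape of the potential $\rho$. Since every branch of $\rho$ returns a non-negative value and $\varrho(\gamma)=\sum_{v\in V}\rho(\gamma,v)$, the hypothesis $\gamma\in\Gamma_{cl}$ (that is, $\varrho(\gamma)=0$) forces $\rho(\gamma,v)=0$ for \emph{every} node $v$. I would then argue, node by node, that $\rho(\gamma,v)=0$ can be realized only through the last (``otherwise'') branch, and translate the guard of that branch into the disjunction $Legal(\gamma,v)\vee Reset(\gamma,v)$.

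First I would dispose of the three non-trivial branches. If $v\in\mathcal{E}(\gamma)$ while $v\notin\mathcal{E}_k(\gamma)\cup\mathcal{E}_\pi(\gamma)$, the third branch gives $\rho(\gamma,v)=1\neq 0$; hence no node may carry $b_v(\gamma)=\top$ in $\gamma$. For the first two branches the key quantitative fact is that both the cleaner wave and the token wave need at least one synchronous step to reach $v$ from a strictly distinct source: a cleaner node of $v$ lies at graph-distance at least $1$, so $E(\gamma,v)\ge 1$ whenever $Cl(\gamma,v)\neq\emptyset$, and a token reaching $v$ from an ancestor likewise requires $\beta(\gamma,v)\ge 1$ by the dissemination count of Section~\ref{sec:tokDis}. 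Consequently, for $v\in\mathcal{E}_k(\gamma)\cup\mathcal{E}_\pi(\gamma)$ the first branch yields $\rho(\gamma,v)=\min\{E(\gamma,v),\beta(\gamma,v)\}\times n\ge n>0$, and for $v$ with $Cl(\gamma,v)\neq\emptyset$ the second branch yields $\rho(\gamma,v)=E(\gamma,v)\times n\ge n>0$. Both contradict $\rho(\gamma,v)=0$, so in $\gamma$ we must have $v\notin\mathcal{E}_k(\gamma)$, $v\notin\mathcal{E}_\pi(\gamma)$, $Cl(\gamma,v)=\emptyset$, and $v\notin\mathcal{E}(\gamma)$.

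It then remains to interpret these four exclusions, using $\gamma\in\Gamma_{cl}\subseteq\Gamma_{tr}\subseteq\Gamma_{\bar e}$. Fix $v$ and split on $\rk_v(\gamma)$. If $\rk_v(\gamma)\neq\bot$, then $v\notin\mathcal{E}_k(\gamma)$ gives $Legal^d(\gamma,v)$ and $v\notin\mathcal{E}_\pi(\gamma)$ gives $Legal^\pi(\gamma,v)$, so by~(\ref{eq:legalv}) the node satisfies $Legal(\gamma,v)$. If $\rk_v(\gamma)=\bot$, then $v\notin\mathcal{E}(\gamma)$ gives $b_v(\gamma)\neq\top$; since $\gamma\in\Gamma_{\bar e}$ the node has $Er_{var}(\gamma,v)=\mathit{false}$, and as $b_v(\gamma)\neq\top$ together with $\rk_v(\gamma)=\bot$ rules out $Er_{prg}(\gamma,v)$, reading off~(\ref{eq:erRst}) forces $t_v(\gamma)=\mathit{false}$ and $b_v(\gamma)=\bz$, which is exactly $Reset(\gamma,v)$. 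In either case $v$ satisfies $Legal(\gamma,v)\vee Reset(\gamma,v)$, and since $v$ was arbitrary this establishes the corollary.

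The main obstacle I expect is the strict positivity of $E$ and $\beta$ in the first two branches, together with keeping the error sets $\mathcal{E}_k$, $\mathcal{E}_\pi$, $\mathcal{E}$ and the cleaner-reachable nodes consistent with the branch ordering of $\rho$. In particular one must rule out the degenerate possibility that $v$ is simultaneously an error node and rank-erroneous while acting as its own cleaning or token source, which would make the minimized quantity vanish. This is precisely where the fact that $\mathcal{E}_k$ and $\mathcal{E}_\pi$ describe nodes poised to execute \RerRank\ (hence with $b_v\neq\top$, making them disjoint from $\mathcal{E}$) and the at-least-one-step nature of both waves, as already exploited in the proof of Lemma~\ref{lem:Clean}, must be invoked.
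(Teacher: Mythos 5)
You follow what is, in effect, the paper's own route: the paper gives no standalone proof of this corollary (it only asserts that it is ``derived from the proof of Lemma~\ref{lem:Clean}''), and unwinding $\varrho(\gamma)=0$ branch by branch, as you do, is exactly that derivation. Your translation of the ``otherwise'' branch is also the right one: $v\notin\mathcal{E}_k(\gamma)\cup\mathcal{E}_\pi(\gamma)$ together with $\rk_v(\gamma)\neq\bot$ yields $Legal(\gamma,v)$ by Equations~(\ref{eq:legald})--(\ref{eq:legalv}), while $\rk_v(\gamma)=\bot$, $b_v(\gamma)\neq\top$ and $Er_{var}(\gamma,v)=\mathit{false}$ (available because $\Gamma_{cl}\subseteq\Gamma_{tr}\subseteq\Gamma_{\bar e}$) force $t_v(\gamma)=\mathit{false}$ and $b_v(\gamma)=\bz$, i.e.\ $Reset(\gamma,v)$.

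There is, however, one concrete gap, in the positivity argument for the \emph{second} branch of $\rho$. Your disjointness observation (nodes of $\mathcal{E}_k\cup\mathcal{E}_\pi$ are poised to execute rule \RerRank, hence have $b\neq\top$ and cannot be their own cleaners) only protects the \emph{first} branch. The second branch applies precisely to nodes $v\notin\mathcal{E}_k(\gamma)\cup\mathcal{E}_\pi(\gamma)$, and nothing in your argument prevents such a $v$ from itself lying in $\mathcal{E}(\gamma)$: for instance a node with $\rk_v(\gamma)=\bot$ and $b_v(\gamma)=\top$ passes $Er_{var}$, $Er_{tp}$ and $Er_N$, so it can survive into $\Gamma_{\bar e}$ and $\Gamma_{tr}$. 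For such a node with a non-error, non-root neighbor $v_0$, the predicate $\pi^G(\gamma,v,v)$ is satisfied (take the one-node path $v_0$), so $v\in Cl(\gamma,v)$, $E(\gamma,v)=0$, and the second branch returns $0$ --- yet $v$ is neither legal nor reset, so both your deduction ``$\rho(\gamma,v)=0$ implies $v\notin\mathcal{E}(\gamma)$'' and, read literally, the corollary itself would fail. Closing this requires invoking the convention the paper implicitly uses inside the proof of Lemma~\ref{lem:Clean}, where it writes $\rho(\gamma_0,v)=1$ for \emph{every} $v\in\mathcal{E}(\gamma_0)$: membership in $\mathcal{E}(\gamma)$ takes priority over the second branch (equivalently, cleaner nodes are distinct from the node being cleaned). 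Under that reading every node of $\mathcal{E}(\gamma)$ scores at least $1$, all four of your exclusions follow from $\varrho(\gamma)=0$, and the rest of your argument stands as written.
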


\begin{corollary}
    $\forall \gamma\in \Gamma_{cl}$, $\forall u,v\in \mathcal{C}_i$ with $d^r_u=d^r_v$, $\forall \pi_v\in \Pi(\gamma,r,v)$ and $\forall \pi_u\in \Pi(\gamma,r,u)$ we have $b(\gamma,\pi_u)= b(\gamma,\pi_v)$ and $t(\gamma,\pi_u)= t(\gamma,\pi_v)$.
\label{cor:legalpath}    
\end{corollary}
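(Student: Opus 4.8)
The plan is to reduce the statement to a level-wise uniformity property and then to establish that property directly from the consistency predicates that hold throughout $\Gamma_{cl}$.

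First, observe that since $\gamma\in\Gamma_{cl}$, Corollary~\ref{cor:legalReset} guarantees that every node is either legal or a reset node. A reset node $w$ has $\rk_w=\bot$, hence $P(w)=\emptyset$ and $\Pi(\gamma,r,w)=\emptyset$, so the claim is vacuous for it; thus I may assume $u$ and $v$ are legal. Moreover every path in $\Pi(\gamma,r,u)\cup\Pi(\gamma,r,v)$ follows the parent--child relation and therefore visits exactly one (legal) node at each level $0,1,\dots,d$, where $d=d^r_u=d^r_v$; all the intermediate non-root nodes lie in $\mathcal{C}_i$ because they are joined to $u$ (resp.\ $v$) by a sub-path avoiding $r$. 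Consequently it suffices to prove the stronger statement that, within each $r$-component, all legal nodes at a given distance $\ell$ from $r$ carry the same pair $(b,t)$: if this holds, then the word $b(\gamma,\pi)$ of any shortest path to a level-$d$ node is the fixed string $b^{0}b^{1}\cdots b^{d}$, where $b^{\ell}$ is the common bit at level $\ell$, independently of the path and of the chosen endpoint, and symmetrically for $t(\gamma,\pi)$; this is exactly the desired equality.

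To establish level-wise uniformity, I would fix two legal nodes $u,v\in\mathcal{C}_i$ at the same distance $\ell$ and a path $u=z_0,z_1,\dots,z_m=v$ in $G$ that avoids $r$, which exists precisely because $u$ and $v$ share an $r$-component. The engine is that in $\Gamma_{cl}$ every error predicate is false; in particular $ErP$, $ErC$, $ErS$ (Equations~\ref{eq:ErP},~\ref{eq:ErC},~\ref{eq:ErS}) and $Er_\pi$ are false at every node. These give four local transport rules for the pair $(b,t)$: two legal siblings agree by $\neg ErS$, the two legal parents of a common node agree by $\neg ErP$, the two legal children of a common node agree by $\neg ErC$, and any two equal-rank neighbours of a common reset node agree by $\neg Er_\pi$. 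I would then combine these into the single equality $(b_{z_0},t_{z_0})=(b_{z_m},t_{z_m})$ by a discrete mountain-climbing induction on the potential $\sum_{j} d^r_{z_j}$: repeatedly pick a locally highest vertex $z_j$ of the walk and eliminate it, replacing the detour $z_{j-1},z_j,z_{j+1}$ by the equality of its two shoulders (via $\neg ErP$ when $z_j$ is legal, via $\neg Er_\pi$ when $z_j$ is a reset node), collapsing flat runs with $\neg ErS$, and treating valleys symmetrically through $\neg ErC$. Each move strictly lowers the potential while preserving the endpoints and the invariant that consecutive entries of the shrinking state-sequence are provably equal, so the process terminates with the endpoints equal.

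The step that needs the most care --- and which I expect to be the main obstacle --- is making this reduction rigorous around reset nodes. A reset vertex is neither a parent, child, nor sibling in the rank sense, so the peak/valley bookkeeping must be driven by graph distances together with $Er_\pi$ rather than by the rank-based relations, and one must check that a run of consecutive reset nodes, or a shoulder pair in which one node is legal and the other reset, never breaks the chain of equalities; this is exactly where the precise form of $Er_\pi$ (every pair of equal-rank neighbours of a reset node is forced to agree) is essential. Once level-wise uniformity is proved inside an arbitrary $\mathcal{C}_i$, the reduction above yields $b(\gamma,\pi_u)=b(\gamma,\pi_v)$ and $t(\gamma,\pi_u)=t(\gamma,\pi_v)$ for all $u,v\in\mathcal{C}_i$ with $d^r_u=d^r_v$, and since the argument is component-local it holds for every $r$-component.
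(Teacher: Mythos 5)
Your reduction of the corollary to level-wise uniformity is sensible, and so is your instinct that reset nodes are the delicate point, but the implication your whole proof rests on --- \emph{``all error predicates are false in $\gamma$'' implies level-wise uniformity} --- is not just unproven, it is false, and the paper's own Figure~\ref{fig:badPath}(b) is a counterexample. In that configuration every node satisfies $\neg Er$ (the paper says so explicitly when introducing the scenario), every ranked node is individually legal, and there are no error nodes and no cleaner nodes, so all the static conditions entering the definitions of $\Gamma_{\bar e}$, $\Gamma_{tr}$ and $\Gamma_{cl}$ hold; yet $w_1$ and $w_2$ are two legal nodes of the same $r$-component at distance $2$ from $r$ with $b_{w_1}=\bo \neq \bz = b_{w_2}$, and the paths $r,u_1,w_1$ and $r,u_2,w_2$ have $b$-words $1,1,1$ and $1,1,0$. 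Your transport argument gets stuck exactly where this example says it must: $\neg Er_\pi(v)$ does force the two cousins $x_1,x_2$ (the equal-rank neighbours of the reset node $v$) to agree --- and they do agree, both being $(\mathit{false},\bz)$ --- but no predicate of the paper propagates agreement from two equal-state, equal-rank, \emph{non-adjacent} nodes up to their respective parents: $ErP(x_1)$ and $ErP(x_2)$ each compare only the parents of a single node. So the peak-elimination step that should carry you from the pair $(x_1,x_2)$ to the pair $(w_1,w_2)$ simply does not exist. There is also a separate soundness problem in how you obtain your ``transport rules'': membership in $\Gamma_{\bar e}$ gives $\neg Er_N$, not $\neg ErP \wedge \neg ErC \wedge \neg ErS$, and $Er_N$ (Equation~\ref{eq:erN}) carries the guard $\neg \mathit{TakeO}(v)$; whenever $\mathit{TakeO}(v)$ holds --- which recurs forever even in perfectly clean configurations, e.g.\ for a node whose siblings hold tokens with bit $\bo$, since then $\mathit{Tok}(v)\neq\emptyset$ and $\mathit{Tok}(v)\neq P(v)$ --- $Er_N(v)$ is vacuously false and certifies nothing about parents, children, or siblings.

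This is precisely why the paper does not argue statically: it extracts the corollary from the \emph{proof} of Lemma~\ref{lem:Clean}. The discrepancy of Figure~\ref{fig:badPath}(b) is eliminated dynamically --- because $b_{w_1}=\bo$ while $b_{w_2}=\bz$, the next token reaches $x_1$ one step before $x_2$, and at that later configuration (and only then) $Er_\pi(v)$ fires, $v$ enters the error state, and the cleaner wave dismantles both branches; asymmetric arrivals at ranked nodes are likewise caught by $\mathit{TakeO}$ and rule \RerRank. The potential $\varrho$ is set up so that these future detections are charged to it, and the corollary describes what remains once this process has run to completion. In short, the corollary is a statement about the outcome of the token dynamics, so any proof must reason about future token arrivals, as the paper's does; the instantaneous falsity of the error predicates in one configuration --- which is all your proposal uses --- is too weak to imply it.
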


\subsubsection{Convergence to legal configurations}

Let $\xi:\Gamma_{cl}\times V \rightarrow \mathbb{N}$ be the following potential function: 
\begin{equation}
    \xi(\gamma,v)=
    \begin{cases}
        \beta(\gamma,v)+1 & \text{if } Reset(\gamma,v)=\mathit{true}\\
        0 & \text{otherwise}
    \end{cases}
\end{equation}

Let $\Xi:\Gamma_{cl}\times V \rightarrow \mathbb{N}$ be the following potential function: 
\begin{equation}
\Xi(\gamma)=\sum_{v\in V}\xi(\gamma,v)
\end{equation}
Note that when $\Xi(\gamma) = 0$, it means that for every node $v$, $Legal(\gamma, v) = \mathit{true}$. In other words, $\gamma$ belongs to $\Gamma^*$, the set of legal configurations.
\begin{lemma}
$\Gamma_{cl} \triangleright \Gamma^*$ in $O(2^\varepsilon)$ steps.
\label{lem:BFS} 
\end{lemma}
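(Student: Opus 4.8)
The plan is to prove the two ingredients of the attractor relation $\Gamma_{cl} \triangleright \Gamma^*$ separately: closure of $\Gamma^*$ and convergence from $\Gamma_{cl}$ within $O(2^\varepsilon)$ steps. Closure is already available, since $\Gamma^* \subseteq \Gamma_{cl} \subseteq \Gamma_{\bar e}$ and Lemma~\ref{lem:clos} establishes that $\Gamma^*$ is closed, so it suffices to prove convergence. For that part I would argue that $\Xi$ is a valid variant function: it is non-negative, it vanishes exactly on $\Gamma^*$, and it strictly decreases along every transition out of $\Gamma_{cl}\setminus\Gamma^*$. The first two points are immediate from the definition of $\xi$ together with Corollary~\ref{cor:legalReset}: in any $\gamma\in\Gamma_{cl}$ every node is either legal or reset, so $\Xi(\gamma)=0$ forces the absence of reset nodes, hence $Legal(\gamma,v)$ for all $v$, i.e. $\gamma\in\Gamma^*$.

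The heart of the argument is the strict decrease of $\Xi$, which I would organise around the behaviour of a single reset node $v$ and its associated quantity $\beta(\gamma,v)$. If $\beta(\gamma,v)>0$, then $v$ has not yet been offered a token, and I would invoke the token-dissemination analysis of Section~\ref{sec:tokDis}: in the synchronous model either a token already present in a spanning structure above $v$ advances one hop closer (decreasing the $d_v^u$ branch of $\beta$ by one), or the root's neighbours cycle their $t$ value and $\hat{\mathcal{R}^c}(\gamma,v)$ drops by at least $0.3$ per step. In both cases $\beta(\gamma_{i+1},v)<\beta(\gamma_i,v)$, so $\xi$ strictly decreases for $v$. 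If $\beta(\gamma,v)=0$, then $v$ is offered one or more tokens; here I would use Corollary~\ref{cor:legalpath} to show that all token senders carry the same rank and that no other neighbour of $v$ shares that rank, so that $\mathit{TakeR}(v)$ holds. Consequently $v$ fires \Rreach, adopts rank $\rk_t(v)$ and acquires a token, thereby ceasing to be reset and dropping its contribution from $1$ to $0$.

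Before concluding I would verify two stability facts that keep the sum honest. First, the rank chosen by \Rreach\/ must be the legal one: since the senders are non-reset, Corollary~\ref{cor:legalReset} makes them legal, and two neighbours of $v$ with equal rank have graph distance at most $2$, forcing their true distances to coincide; a synchronous token front therefore reaches $v$ exactly one hop beyond those senders, so $\rk_t(v)=d^r_v \bmod 3$ and $v$ becomes legal rather than merely non-reset. Second, no transition out of $\Gamma_{cl}$ creates a fresh reset node and no legal node reverts, which follows from the closure reasoning of Lemma~\ref{lem:clos} applied node-wise inside $\Gamma_{cl}$; moreover a node joining the structure only adds token sources for its descendants, so the $\beta$ values of the remaining reset nodes cannot increase. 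Together these give $\Xi(\gamma_{i+1})<\Xi(\gamma_i)$ whenever $\gamma_i\notin\Gamma^*$.

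For the time bound, monotone decrease alone only yields a polynomial-in-$n$ multiple of the potential, so I would instead read the running time off the dissemination estimate directly: the last reset node to be absorbed lies at distance at most $\varepsilon$ from $r$, and by $\mathcal{R}(v)=(2^{d^r_v-1}-1)\times 3 + d^r_v-1 = O(2^\varepsilon)$ it is offered a token within $O(2^\varepsilon)$ steps, after which it fires \Rreach; hence $\Gamma^*$ is reached in $O(2^\varepsilon)$ steps. I expect the main obstacle to be precisely the join-correctness step of the third paragraph: arguing rigorously that the synchronous token wave delivers tokens to a reset node only from genuine predecessors at distance $d^r_v-1$ (so that $\rk_t(v)$ is correct and $Legal^\pi$ is preserved), and that no interleaving lets a reset node latch onto an inconsistent pair of paths, is the delicate part; the monotone-decrease bookkeeping and the closure invocation are comparatively routine.
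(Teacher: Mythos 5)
Your proposal follows essentially the same route as the paper's proof: the same variant function $\Xi$, the same case analysis on $\beta(\gamma,v)$ (token still distant, token one hop away, token offered), the same invocation of Corollaries~\ref{cor:legalReset} and~\ref{cor:legalpath} to argue that \Rreach\ installs a legal rank, and the same reading of the $O(2^\varepsilon)$ bound from the dissemination estimate of Section~\ref{sec:tokDis}. The ``delicate'' join-correctness step you flag is treated no more rigorously in the paper --- it simply asserts that all token proposers are at the same distance and share the same state, hence the joining node becomes legal --- so your attempt matches, and in places (the mod-3 distance-coincidence argument, the explicit closure and non-regression bookkeeping) slightly exceeds, the paper's own level of detail.
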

\begin{proof}[Proof of lemma \ref{lem:BFS} ] 
From corollary~\ref{cor:legalReset}, a configuration $\gamma_0\in \Gamma_{cl}$ contains only reset nodes, and nodes $v$ with $Legal(\gamma_0,v)=\mathit{true})$. 
Also, legal nodes within the same connected component that are at the same distance from the root in the graph share the same state thanks to corollary~\ref{cor:legalpath}. 
In a synchronous setting, those node apply the same rule simultaneously, and consequently remain in synchronized states forever. 

If $\xi(\gamma_0,v)=\beta(\gamma_0,v)=\hat{\mathcal{R}}^c(\gamma_0, v)$, only the introduction of a new token in all every path $\pi(\gamma,r,v)$ can reduce $\hat{\mathcal{R}}^c(\gamma_0, v)$ by one. This event occurs every 3 steps, whenever the root's neighbors $u$ change their variable $t_u$. So $\xi(\gamma_1,v)<\xi(\gamma_{0},v)$.

If $\xi(\gamma_0,v)=\beta(\gamma_0,v)=d^u_v$, this means every node $w$ in the path between $u$ and $v$ has $b_u(\gamma_i)=\bo$. So, after one step, the tokens are accepted by every child $w$ of $u$. Then, $\beta(\gamma_{0},v)=d^u_v>\beta(\gamma_{1},v)=d^w_v=d^u_v-1$ and  $\xi(\gamma_1,v)<\xi(\gamma_{0},v)$.

Finally, if $\xi(\gamma_0,v)=1$, in other words if $\beta(\gamma_0,v)=0$, then the reset node $v$ executes \Rreach. All the nodes proposing the token are at the same distance, and share the same state. Consequently, when $v$ joins the spanning structure, it satisfies $Legal(\gamma_{1},v)$.
So $\xi(\gamma_1,v)=0<\xi(\gamma_{0},v)=1$.
To conclude $ \Xi(\gamma_1)<\Xi(\gamma_0)$.

Overall, in any possible case, the value returned by $\varrho(\gamma)$ strictly decreases by a strictly positive lower bounded value (the only non integer value we used is $0.3$, see Equation~\ref{eq:hatRc}). 
So, the strictly decreasing function $\varrho$ reaches zero. 
The number of steps to reach a configuration $\gamma$ where $\varrho(\gamma) =0$ is bounded by $2^\epsilon$ (see section~\ref{sec:tokDis} for time complexity).
\end{proof}

\section{Concluding Remarks}

We presented the first constant space deterministic self-stabilizing BFS tree construction algorithm that operate in semi-uniform synchronous networks of arbitrary topology. Our solution~$\mathtt{TokBin}$ is independent of any global network parameter (maximum degree, diameter, number of nodes, etc.). As classical techniques such as point-to-neighbor and distance-to-the-root cannot be used in such a constrained setting, we introduced a novel infinite token stream technique to break cycles and restore BFS construction that may be of independent interest for other tasks. 
We briefly mention how our BFS construction can be used to optimally color bipartite graphs (that is, with two colors \textsc{Black} and \textsc{White}). All nodes run $\mathtt{TokBin}$ so that eventually a $BFS^{\equiv 3}$ is constructed. With respect to colors, the root has color \textsc{Black} and retains it forever, while every other node take the opposite color of its parents whenever it executes $\mathtt{TokBin}$ (if it has parents). After stabilization of the $BFS^{\equiv 3}$ construction, all children of the root take the same color, and after one step their children take the root color, and so on, so that a 2-coloring of the bipartite graph is eventually achieved. This very simple application of $\mathtt{TokBin}$ still uses constant memory, as opposed to dedicated previous solutions to the same problem found in the literature~\cite{SS93,KK06} that require $\Omega(\log n)$ bits per node.

Two important open problems remain. 
First, our solution heavily relies on the network operation being fully synchronous (all nodes execute code in a lock-step synchronous fashion). It would be interesting to extend our token stream technique to an asynchronous setting. In this setting, we expect that some synchronization technique will be necessary, and it remains unclear whether such additional mechanism can be done in constant space. 
Sexcond, while our algorithm uses a constant amount of memory per node, its time complexity (measured in steps) is exponential. 
This is due to the exponential decay mechanism for tokens that may require the root to send an exponential number of them to finally reach the furthest nodes in the network. One could speed up this process by adding more states to the $b$ variables: if the domain of $b$ becomes $\{\bz,\bo,\ldots,\bk,\top\}$, and token are accepted for any value that is not $\bz$ or $\top$ (assuming $b$ is incremented modulo $k$ anytime the incoming token is accepted), less tokens are destroyed, but the overall time complexity remains exponential. 
By contrast, solutions that make use of $O(\log \Delta + \log \log n)$ bits per node have polynomial time complexity~\cite{BT18}. Whether there exists a constant space and polynomial time solution to the problem is left for future research.

\bibliographystyle{plain}

\end{document}